\newtheorem{thm}{Theorem}
\newtheorem{defi}{Definition}
\newtheorem{lem}[thm]{Lemma}
\newcommand{\DN}{\gamma}
\newcommand{\TDN}{\gamma_t}
\newcommand{\OT}{[1,2]}
\newcommand{\OTDN}{\gamma_{\OT}}
\newcommand{\OTTDN}{\gamma_{t\OT}}
\newcommand{\M}{{\cal M}_p}
\begin{document}

\markboth{F. Beggas, V. Turau, M. Haddad and H. Kheddouci}
{[1,2]-Domination in Generalized Petersen Graphs}

%
\catchline{}{}{}{}{}
%

\title{[1,2]-DOMINATION IN GENERALIZED PETERSEN GRAPHS\footnote{This work is supported by PHC PROCOPE 2015-2017, project id 33394TD and PPP PROCOPE 2015, project id 57134870.}
}

\author{FIROUZ BEGGAS$^a$, VOLKER TURAU$^b$, MOHAMMED HADDAD$^a$\footnote{Contact author: mohammed.haddad@univ-lyon1.fr} \\ HAMAMACHE KHEDDOUCI$^a$}

\address{$^a$University of Claude Bernard Lyon 1, 43 Bd du 11 Novembre 1918, F-69622, Villeurbanne, France\\ $^b$Institute of Telematics, Hamburg University of Technology, 21073 Hamburg, Germany, Am Schwarzenberg-Campus 3}

\maketitle


\begin{abstract}
A vertex subset $S$ of a graph $G=(V,E)$ is a $\OT$-dominating set
if each vertex of $V\backslash S$ is adjacent to either one or two
vertices in $S$. The minimum cardinality of a $\OT$-dominating set
of $G$, denoted by $\OTDN(G)$, is called the $[1,2]$-domination
number of $G$. In this paper the $[1,2]$-domination and the
$[1,2]$-total domination numbers of the generalized Petersen graphs
$P(n,2)$ are determined.
\end{abstract}

\keywords{Generalized Petersen graph; Vertex domination; $\OT$-domination; $\OT$-total domination}


\section{Introduction}
The study of domination problems in graph theory has a long history.
For an undirected graph $G=(V,E)$ a subset $S\subseteq V$ is a {\em
  dominating set} if every vertex not in $S$ has a neighbor in $S$.
The {\em domination number} $\DN(G)$ is the minimum size of a
dominating set in $G$. For many classes of graphs the exact values of
$\DN(G)$ are known, e.g., $\DN(P_n) =\DN(C_n) =\lceil n/3\rceil$. Here
$P_n$ and $C_n$ are the paths and cycle graphs respectively with $n$
vertices. For the class of generalized Petersen graphs $P(n,2)$
introduced by Watkins \cite{Watkins:1969} it was conjectured by Behzad
et al.\ that $\DN(P(n,2)) = \lceil 3n/5 \rceil$ holds
\cite{Behzad:2008}. This conjecture was later independently verified
by several researchers \cite{Ebrahimi:2009,Fu:2009,Yan:2009}.

Over the years different variations of graph domination were
introduced, e.g., connected domination, independent domination, and
total domination. The domination number $\DN(G)$ and the total
domination number $\TDN(G)$ of graph $G$ are among the most well
studied parameters in graph theory. Some of these domination numbers
are known for generalized Petersen graphs. Cao et al.\ computed the
total domination number of $P(n,2)$ as $\TDN(P(n,2)) = 2\lceil n/3
\rceil$ \cite{Cao:2009}. Further results can be found in
\cite{Li:2013,Zelinka2002}.

This paper considers
$\OT$-domination, a concept introduced by Chellali et al.\
\cite{Chellali:2013}. A subset $S\subseteq V$ is a {\em
  $\OT$-dominating set} if every vertex not in $S$ has at least one
and at most two neighbors in $S$, i.e., $1\le |N(v)\cap S|\le 2$ for
all $v\in V \setminus S$. The {\em $\OT$-domination number} $\OTDN(G)$
is the minimum size of a $\OT$-dominating set in $G$. Obviously
$\DN(G) \le \OTDN(G)$ for any graph $G$. Chellali et al.\ proved that
if $G$ is a $P_4$-free graph then $\DN(G) = \OTDN(G)$. A
characterization of graphs with this property is an open problem. More
results about $\OT$-domination can be found in \cite{Yang:2014}.
 
This paper also deals with the $\OT$-total domination defined as follow.
A subset $S\subseteq V$ is a {\em
  $\OT$-total dominating set} if every vertex $v$ in $V$ has at least one
and at most two neighbors in $S$, i.e., $1\le |N(v)\cap S|\le 2$ for
all $v\in V$. The {\em $\OT$-total domination number} $\OTTDN(G)$
is the minimum size of a $\OT$-total dominating set in $G$.
Clearly, $\OTDN(G) \leq \OTTDN(G)$ for each graph $G$.

In this paper we analyze the $\OT$-domination numbers of the generalized
Petersen graphs $P(n,2)$ and prove the following theorem.

\begin{thm}
   $\OTDN(P(n,2)) = \left\{
    \begin{array}{ll}
        2n/3  & \mbox{if } n \equiv 0,3[6]  \\
        2\lfloor n/3 \rfloor+1 & \mbox{if } n \equiv 1[6]  \\
       2\lfloor n/3 \rfloor+2 & \mbox{otherwise.} 
    \end{array}
\right.$ for $n \ge 5$.   \label{thm:thm1}
\end{thm}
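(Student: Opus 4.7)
My plan would be to establish matching upper and lower bounds on $\OTDN(P(n,2))$.

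For the upper bound, I would exhibit, for each residue of $n$ modulo $6$, an explicit $\OT$-dominating set of the claimed size. The natural building block is the period-$3$ pattern $S_0=\{u_{3k},\,v_{3k+1}:k\ge 0\}$. When $3\mid n$ this pattern closes into a cyclically valid $\OT$-dominating set of size $2n/3$: a routine verification shows that every non-selected outer vertex $u_{3k+1}$ is covered by both $u_{3k}$ and $v_{3k+1}$, every $u_{3k+2}$ only by $u_{3(k+1)}$, every $v_{3k}$ by $u_{3k}$ and $v_{3(k-1)+1}$, and every $v_{3k+2}$ only by $v_{3k+1}$. This settles the residues $0$ and $3$ modulo $6$. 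For the residues $1,2,4,5$ I would apply $S_0$ to the initial block of $3\lfloor n/3\rfloor$ columns and then insert either one vertex (if $n\equiv 1\pmod 6$) or two vertices (otherwise) in the trailing columns, chosen so that every boundary vertex still has $1$ or $2$ neighbors in the union.

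For the lower bound, let $S$ be any $\OT$-dominating set and define the column weights $a_i=|S\cap\{u_i,v_i\}|\in\{0,1,2\}$. The heart of the argument would be a window inequality forcing $(a_i)$ not to stay below $2/3$ on average. A naive three-column lemma $a_{i-1}+a_i+a_{i+1}\ge 2$ does not hold on its own because an isolated $S$-vertex in three consecutive columns can be compensated through the inner-cycle edges $v_jv_{j+2}$, so I would instead derive an amortized inequality over a wider window (five columns should suffice) by enumerating the short patterns $(a_{i-2},\dots,a_{i+2})$ compatible with the $\OT$ constraint at each vertex in the window. Summing cyclically would yield $|S|\ge\lceil 2n/3\rceil$, which already matches the claim whenever $n\not\equiv 4\pmod 6$.

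The main obstacle is the extra $+1$ required when $n\equiv 4\pmod 6$. Here I would argue by contradiction: if $|S|=4k+3$ for $n=6k+4$, the preceding averaged inequality must be essentially tight, which forces $(a_i)$ to consist almost entirely of the period-$3$ block $(1,0,1)$ with a single anomaly. Because $6k+4$ is not a multiple of $3$, propagating the $\OT$ constraint along the inner edges $v_jv_{j+2}$ around this near-periodic sequence leads to a closing-up contradiction, whence $|S|\ge 4k+4$. Matching this sharpened bound with the explicit construction completes the proof.
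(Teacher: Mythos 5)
Your upper-bound sketch is fine (your period-$3$ pattern is a valid $\OT$-dominating set for $3\mid n$, and patching the trailing columns is routine), and your extremal analysis for $n\equiv 4\,[6]$ matches in spirit what the paper does. The problem is the core of your lower bound: the claimed amortized five-column window inequality. To recover $|S|\ge\lceil 2n/3\rceil$ by cyclic summation over windows of width $5$ you need every five consecutive columns to contain at least $\lceil 10/3\rceil=4$ vertices of $S$, and this is simply false. Already the minimum $\OT$-dominating set $\{u_1,v_1,u_4,v_4,v_6,v_7\}$ of $P(8,2)$ has a five-column window (columns $2,3,4,5,6$) of weight $3$. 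Worse, the $\OT$ constraints admit locally consistent configurations in which a block of three consecutive columns carries a single $S$-vertex $u_i$ (the paper's ``type B'' block): domination of $v_{i\pm1}$ and $u_{i\pm2}$ is then supplied entirely from columns $i\pm3$ and $i\pm4$, so columns $i-2,\dots,i+2$ carry total weight $1$ and even a six-column window can drop to weight $3$. No fixed short window with the threshold $\lceil 2w/3\rceil$ survives such configurations, so the enumeration you propose cannot close; the deficit created by a single-vertex block is repaid non-locally, and ruling out (or accounting for) such blocks is exactly the hard part of the theorem.

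The paper resolves this by splitting on whether some block contains exactly one $S$-vertex. If not, the honest three-column inequality $a_{i-1}+a_i+a_{i+1}\ge 2$ does hold, and an analysis of the resulting integer program plus the extremal case $n\equiv 4\,[6]$ (essentially your last paragraph) finishes. If some block does contain exactly one $S$-vertex, the paper classifies such blocks into four types and, via local exchange arguments and an induction that deletes six columns to reduce $P(n,2)$ to $P(n-6,2)$ (using $f(n)=f(n-6)+4$), derives a contradiction. Your proposal is missing this entire second half; without it, or without a genuinely global discharging argument replacing it, the lower bound does not follow.
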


Note that $\OTDN(P(n,2))$ is by a factor of $10/9$ larger than
$\DN(P(n,2))$. After that, we investigate the problem of $\OT$-total
domination and prove the following result.

\begin{thm}
$\OTTDN(P(n,2)) = \left\{
\begin{array}{ll}
5  & \mbox{if } n = 5 \\
2n/3  & \mbox{if } n \equiv 0,3[6]  \\
2\lfloor n/3 \rfloor+2 & \mbox{otherwise.} 
\end{array}
\right.$ for $n \ge 6$.   \label{thm:thm2}
\end{thm}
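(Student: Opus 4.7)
I would prove Theorem~\ref{thm:thm2} by combining explicit constructions for the upper bound with lower bounds that, in most cases, follow immediately from Theorem~\ref{thm:thm1} via $\OTDN(G)\le \OTTDN(G)$. Only two cases demand new arguments: $n=5$ and $n\equiv 1\pmod{6}$, in which $\OTTDN$ strictly exceeds $\OTDN$.

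\noindent\textbf{Upper bound.} For $n\equiv 0\pmod{3}$ I claim that $S=\{u_{3k},v_{3k} : 0\le k<n/3\}$ is a $\OT$-total dominating set of size $2n/3$: a routine check shows every vertex has exactly one neighbor in $S$---the edge $u_{3k}v_{3k}$ covers the two selected vertices, each $u_{3k\pm 1}$ is adjacent to the outer vertex $u_{3k}\in S$, and each unselected $v$ is adjacent to exactly one $v_{3k'}\in S$ via an inner $2$-step edge. This handles $n\equiv 0,3\pmod 6$. For the remaining residues with $n\ge 6$, I would use the same pattern on an initial segment of length $3\lfloor n/3\rfloor$ and repair the short residual block by adding two vertices, yielding a set of size $2\lfloor n/3\rfloor+2$; the precise patch depends on $n\bmod 6$ but is straightforward. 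For $n=5$ one exhibits a $\OT$-total dominating set of size $5$ in the Petersen graph directly.

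\noindent\textbf{Lower bound.} For $n\equiv 0,2,3,4,5\pmod 6$ the claim follows at once from Theorem~\ref{thm:thm1} together with $\OTDN(G)\le \OTTDN(G)$, so only two residual cases require work. For $n=5$, suppose $|S|=4$. Since every vertex of $P(5,2)$ has degree $3$, one has $2|E(G[S])|+|E(S,V\setminus S)|=12$, and the $\OT$-total condition forces $4\le 2|E(G[S])|\le 8$ and $6\le |E(S,V\setminus S)|\le 12$, so $|E(G[S])|\in\{2,3\}$ and $G[S]$ is a $2K_2$ or a $P_4$. Using the vertex-transitivity of $P(5,2)$ one inspects the few orbits of such configurations and verifies that in each case the $\OT$-condition fails at some vertex.

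\noindent\textbf{Main obstacle.} The hardest case is $n\equiv 1\pmod 6$. Assume $|S|=2\lfloor n/3\rfloor+1$; then $S$ is simultaneously a minimum $\OT$-dominating set (attaining the bound of Theorem~\ref{thm:thm1}) and a $\OT$-total dominating set. The decisive step is to exploit the structure of extremal $\OT$-dominating sets for $n\equiv 1\pmod 6$ coming from the proof of Theorem~\ref{thm:thm1}, and show that any such $S$ necessarily contains at least one vertex $v\in S$ with $N(v)\cap S=\emptyset$, contradicting totality. Translating the arithmetic obstruction $n\not\equiv 0\pmod 3$ and the periodic structure of $P(n,2)$ into a local statement producing such an ``isolated'' vertex of $S$ is where the new work beyond Theorem~\ref{thm:thm1} lies, and I expect it to be the technically most delicate part of the proof.
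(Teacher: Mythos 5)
Your reductions for $n\not\equiv 1\,[6]$ (via $\OTDN(G)\le\OTTDN(G)$ and Theorem~\ref{thm:thm1}) and your upper-bound constructions match the paper. The problem is the case $n\equiv 1\,[6]$, which you correctly identify as the crux but do not actually prove. You propose to ``exploit the structure of extremal $\OT$-dominating sets coming from the proof of Theorem~\ref{thm:thm1}'' and to produce a vertex of $S$ with no neighbor in $S$, but no such structural characterization is available: the proof of Theorem~\ref{thm:thm1} for $n\equiv 1\,[6]$ is a pure lower-bound contradiction (the counting in Lemma~\ref{lem:case1} when ${\cal B}_1(S)=\emptyset$, and the block-type elimination when ${\cal B}_1(S)\neq\emptyset$), and neither branch describes the sets attaining $f(n)=2\lfloor n/3\rfloor+1$. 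Such sets do exist, so your plan requires classifying all of them and checking that each contains an isolated vertex of $G[S]$ --- and that classification is precisely the work you defer as ``the technically most delicate part.'' As written, the only case of the theorem that does not follow from Theorem~\ref{thm:thm1} is left unestablished.

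The paper closes this case by a different, self-contained counting argument that avoids any appeal to extremal $\OT$-dominating sets, and you may want to adopt it: since $S$ is total, every vertex of $G[S]$ has degree $1$ or $2$, so the components of $G[S]$ are paths and cycles; a path of order $l$ dominates at most $2l+2$ vertices and a cycle at most $2l$. Writing $n=6k+1$ and assuming $|S|\le 4k+1$, these inequalities force $G[S]$ to be exactly one $P_3$ together with $2k-1$ copies of $P_2$, whose closed neighborhoods must \emph{partition} the $2n$ vertices (every domination bound is met with equality). A short forced-extraction argument then shows that $P(n,2)$ admits no such partition, completing the lower bound. Separately, your $n=5$ degree count is plausible, but note the statement asserts the formula only for $n\ge 6$; the paper disposes of $n=5$ by exhibiting the size-$5$ set directly, and the lower bound $\OTTDN(P(5,2))\ge 5$ still needs the small case analysis you sketch.
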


Note that $\OTTDN(P(n,2))= \OTDN(P(n,2))$ except for the case $n=5$
and $ n \equiv 1[6]$. Surprisingly $\OTTDN(P(n,2))$ is almost equal to
$\TDN(P(n,2))$.

\section{Notation}
\label{sec:notation}
This paper uses standard notation from graph theory which can be found
in textbooks on graph theory such as \cite{Diestel:2012}. For an
extended study about domination concepts the reader is referred to
\cite{Haynes:1998}.

\begin{defi}
  Let $n, k\in \mathbb{N}$ with $k < n/2$. The {\em generalized
    Petersen graph} $P(n,k)$ is the undirected graph with vertices
  $\{u_0,\ldots,u_{n-1}\}\, \cup\, \{v_0,\ldots, v_{n-1}\}$ and edges
  $\{(u_i,u_{i+1}),(u_i,v_i), (v_i,v_{i+k}) \mid 0\le i < n\}$.
\end{defi}

The
graphs $P(n,k)$ are regular graphs with $2n$ vertices and $\Delta=3$.
The domination number $\DN(P(n,k))$ for some values of $k$ are known
\cite{Behzad:2008,Zelinka2002}. In particular Ebrahimi et al.\ proved
in \cite{Ebrahimi:2009} that $\DN(P(n,2)) = \lceil\frac{3n}{5}\rceil$.

\begin{figure}[htb]
\begin{center}
\includegraphics[scale=0.5]{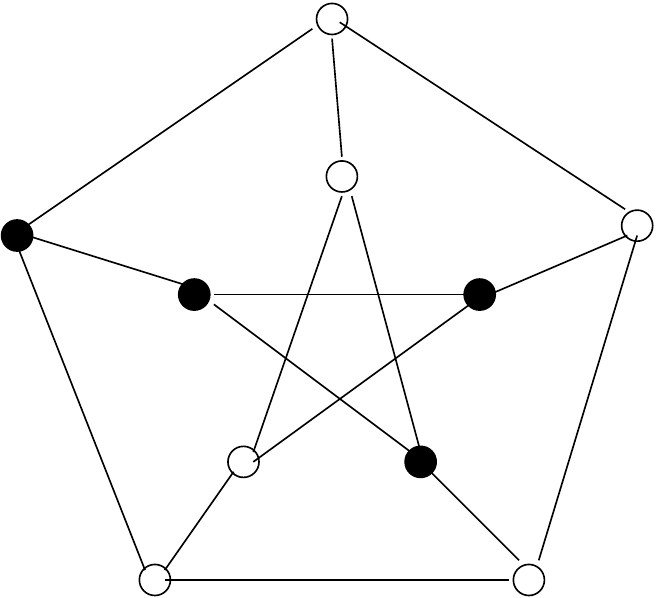}
\hspace*{1.5cm}
\includegraphics[scale=0.5]{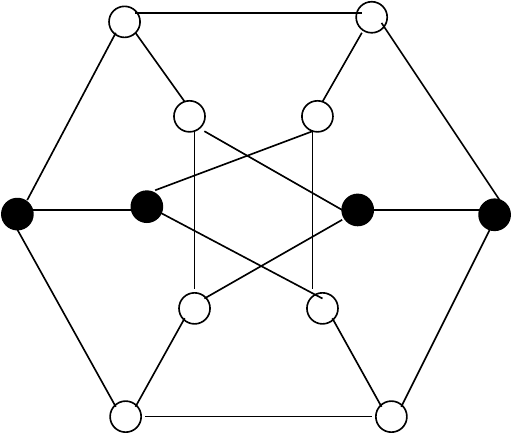}
\hspace*{1.5cm}
\includegraphics[scale=0.5]{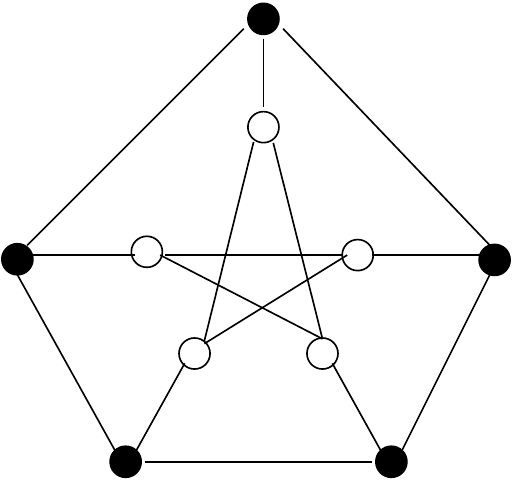}
  \end{center}
\label{fig:graph56}
  \caption{ The minimum $\OT$-domination sets of the generalized Petersen graphs $P(5,2)$ and $P(6,2)$
  and the minimum $[1,2]$-total dominating set for $P(5,2)$.}
\end{figure}

In this paper indices are always interpreted modulo $n$, e.g.
$v_{n+i}=v_i$. Fig.~$1$ shows the graphs
$P(5,2)$ and $P(6,2)$, vertices depicted in black form a \OT-dominating
set of minimum size, i.e., $\OTDN(P(5,2)) = \OTDN(P(6,2)) = 4$
and also for the graph $P(5,2)$, vertices 
depicted in black form a \OT-total dominating
set of minimum size $\OTTDN(P(5,2))=5$.

The proofs of this paper use the following notion of a {\em block}.

\begin{defi}
  A {\em block} $b$ of $P(n,2)$ is the subgraph induced by the six
  vertices $\{v_{i-1},v_{i},v_{i+1},u_{i-1},u_{i},u_{i+1}\}$ for any
  $i\in \{0,\ldots n-1\}$. A block is called {\em positive} if two
  of the indices of $\{v_{i-1},v_{i},v_{i+1}\}$ are odd, otherwise it
  is called {\em negative}.
\end{defi}

Fig.~\ref{img:blocks} shows a series of blocks of $P(n,2)$. The
second block is {\em positive} while the other two are {\em negative}.
Note that blocks can overlap. If $b$ is a block, the block to the
left is denoted by $b^-$ and that to the right by $b^+$.


\begin{figure*}[h!]
\begin{center}
 $b^-$ \hspace{2cm}
$b$ \hspace{2cm}
$b^+$
\includegraphics{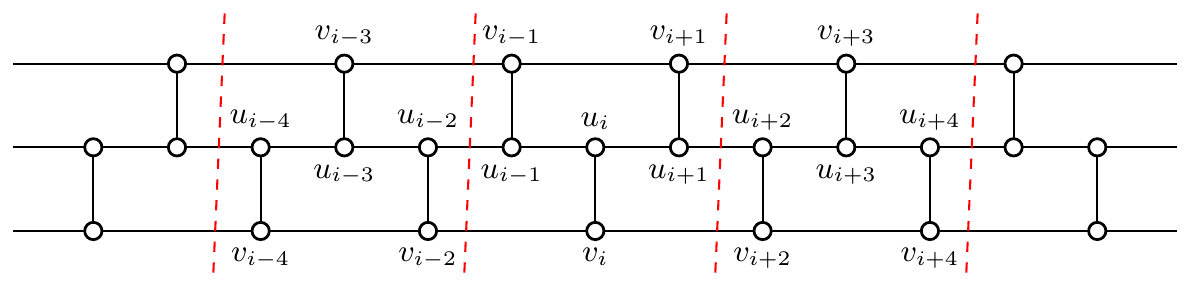} 
\end{center}
\label{img:blocks}
\caption{Partition of $P(n,2)$ into blocks.}
\end{figure*}

\begin{defi}
  Let $S$ be a $\OT$-dominating set. For a subset $U\subseteq V$
  denote by $\gamma_S(U)$ the number of vertices of $S$ that are in
  $U$, i.e., $\gamma_S(U)=|U \cap S|$. For $i\ge 0$ let ${\cal
    B}_i(S)$ be the set of all blocks $b$ with $\gamma_S(b)=i$.
\end{defi}

Note that ${\cal B}_0(S)= \emptyset$ for any dominating set
$S$ of $P(n,2)$. Denote by $f(n)$ the value of the right side of the
equation in Theorem~\ref{thm:thm1}. Note that $f(n) = f(n-6) + 4$ for
any $n\ge 5$.

\section{Determination of $\OTDN(P(n,2))$}
\label{sec:determ-otdnpn-2}
The correctness of Theorem~\ref{thm:thm1} for $n < 12$ can be
verified manually.
 \begin{lem}\label{lem:smallCases}
   $\OTDN(P(n,2)) = f(n)$ for $5\le n <12$. 
 \end{lem}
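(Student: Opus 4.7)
The plan is a case-by-case verification for the seven values $n\in\{5,6,7,8,9,10,11\}$. For each $n$ I would establish the upper bound $\OTDN(P(n,2))\le f(n)$ by exhibiting an explicit $\OT$-dominating set $S_n$ of size $f(n)$ and checking, vertex by vertex, that every $v\notin S_n$ satisfies $|N(v)\cap S_n|\in\{1,2\}$. Figure~1 already supplies such sets for $n=5$ and $n=6$; for the remaining $n$ one can write down small, often rotationally symmetric, constructions (for instance $\{u_0,u_3,u_6,v_1,v_4,v_7\}$ for $n=9$), and the checks are entirely mechanical.

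For the lower bound, the general inequality $\OTDN(G)\ge\DN(G)$ together with Ebrahimi et al.'s formula $\DN(P(n,2))=\lceil 3n/5\rceil$ \cite{Ebrahimi:2009} immediately yields $\OTDN(P(n,2))\ge f(n)$ in the cases $n\in\{6,7,9\}$, where $\lceil 3n/5\rceil=f(n)$. For the remaining values $n\in\{5,8,10,11\}$, the target $f(n)$ strictly exceeds $\lceil 3n/5\rceil$, so the domination bound is insufficient and one must actually rule out every $\OT$-dominating set of size $f(n)-1$.

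The main obstacle is precisely this finite but non-routine enumeration. The approach is to exploit the vertex-transitivity of $P(n,2)$ under the rotation $i\mapsto i+1$ to fix $u_0\in S$ without loss of generality, combined with the simple observation that any $v\notin S$ with all three neighbors in $S$ would violate the upper bound of the $\OT$-constraint (since $P(n,2)$ is $3$-regular); this rules out many configurations at once. After fixing $u_0\in S$ and branching on which of its three neighbors $u_1,u_{n-1},v_0$ lie in $S$, a short case analysis reduces the search to a handful of essentially distinct subsets of size $f(n)-1$ for each $n\in\{5,8,10,11\}$, and each of these can be eliminated by exhibiting a vertex of $V\setminus S$ with zero or three neighbors in $S$. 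Since $2n\le 22$ and $f(n)-1\le 7$ in this range, the enumeration is small enough to be carried out by hand and, if desired, cross-checked by a brief computer search over all $\binom{2n}{f(n)-1}$ subsets.
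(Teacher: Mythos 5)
Your overall strategy matches the paper's, which simply exhibits explicit sets $S_n$ of size $f(n)$ for $5\le n\le 11$ and asserts minimality ``by inspection''; your write-up is in fact more explicit about how the minimality check would be organized, and the observation that $\DN(P(n,2))=\lceil 3n/5\rceil=f(n)$ already settles the lower bound for $n\in\{6,7,9\}$ is a useful shortcut the paper does not spell out. Your example set $\{u_0,u_3,u_6,v_1,v_4,v_7\}$ for $n=9$ does check out.

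There are, however, two flaws in the lower-bound plan. The substantive one concerns $n=10$: there $f(10)=8$ while $\lceil 3\cdot 10/5\rceil=6$, so a hypothetical $\OT$-dominating set could have size $6$ or $7$, and both sizes must be excluded. Your plan only rules out sets of size $f(n)-1=7$. Since $\OT$-domination is not closed under adding vertices (enlarging $S$ can push some $v\notin S$ to three neighbors in $S$), the nonexistence of a size-$7$ $\OT$-dominating set does not imply the nonexistence of a size-$6$ one; you must also enumerate the $\binom{20}{6}$ candidates of size $6$. For $n\in\{5,8,11\}$ the plan is sound precisely because there $f(n)-1=\lceil 3n/5\rceil$, so smaller sets cannot even dominate. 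The second, minor, flaw is the appeal to ``vertex-transitivity under the rotation $i\mapsto i+1$'' to fix $u_0\in S$: that rotation preserves the partition into outer vertices $u_i$ and inner vertices $v_i$ (and $P(8,2)$ and $P(11,2)$ are not vertex-transitive at all), so it only lets you normalize when $S$ contains some outer vertex. The missing case $S\subseteq V$ is easy to dispatch (each $u_i$ is then dominated only by $v_i$, forcing $|S|=n>f(n)-1$), but it must be stated. With these two repairs the argument becomes a complete, if laborious, finite verification of the same kind the paper implicitly relies on.
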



\begin{proof}
By inspection, we can easily see that
the following sets $S_n$ are minimum $\OT$-dominating sets of $P(n,2)$.
$S_5= \{ u_1,v_1,v_3,v_4 \}$, $S_6= \{ u_1,v_1,u_4,v_4 \}$,
$S_7= \{ u_0,v_1,v_2,v_3,u_4 \}$, $S_8= \{ u_1,v_1,u_4,v_4,v_6,v_7 \}$, 
$S_9= \{ u_1,v_1,u_4,v_4,u_7,v_7 \}$, $S_{10}= \{ u_1,v_1,u_4,v_4,u_7,v_7,u_8,v_8 \}$ and
$S_{11}= \{ u_1,v_1,u_4,v_4,u_7,v_7,v_9,v_{10} \}$.
\end{proof}

\begin{lem}\label{lem:upperBound}
  $\OTDN(P(n,2)) \le f(n)$ for $n\ge 5$.
\end{lem}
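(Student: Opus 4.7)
The plan is to prove Lemma~\ref{lem:upperBound} by induction on $n$ in steps of six, exploiting the recurrence $f(n)=f(n-6)+4$ noted just before the lemma statement. The base cases $5\le n<12$ are supplied directly by Lemma~\ref{lem:smallCases}; since the set $\{5,6,\ldots,11\}$ contains at least one representative of every residue class modulo six, the inductive step propagates through all $n\ge 12$ once established.

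The intended inductive construction marries the periodic pattern $P_n=\{u_{3k+1},v_{3k+1}\}$---which is optimal when $3\mid n$ and provides exactly $2n/3$ vertices---with a short boundary cap mimicking one of the small-case sets $S_7,S_8,S_{10},S_{11}$ from Lemma~\ref{lem:smallCases}. Concretely, for $n\equiv 0,3\pmod 6$ I would take $S_n=\{u_{3k+1},v_{3k+1}:0\le k<n/3\}$, which has size exactly $f(n)=2n/3$. For $n\equiv 2\pmod 6$ I would use the periodic pattern on indices $1,4,\ldots,n-4$ together with the cap $\{v_{n-2},v_{n-1}\}$, patterned after $S_8$; for $n\equiv 4\pmod 6$ the cap $\{u_{n-2},v_{n-2}\}$ patterned after $S_{10}$; and for $n\equiv 5\pmod 6$ the cap $\{v_{n-2},v_{n-1}\}$ patterned after $S_{11}$. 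For $n\equiv 1\pmod 6$ the construction is trickier: the five-vertex $S_7$-style block $\{u_0,v_1,v_2,v_3,u_4\}$ would sit on columns $0,\ldots,4$, with the periodic pattern on columns $5,\ldots,n-1$ and possibly a small further adjustment near the join. A direct size count gives $|S_n|=f(n)$ in every case.

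The verification that $S_n$ is a $[1,2]$-dominating set reduces, case by case, to a finite local check. Inside the periodic portion the mod-three symmetry of $P_n$ guarantees every non-$S_n$ vertex has either one or two neighbours in $S_n$. The substantive work lies at the two interfaces where the periodic segment meets the cap, and at the wrap-around from column $n-1$ back to column $0$: here the inner edges $v_iv_{i+2}$ can push a vertex above two $S_n$-neighbours or drop one below a single $S_n$-neighbour. I expect this interface analysis, carried out on the constant number of vertices near each join, to be the main obstacle. For the residues $n\equiv 0,2,3,4,5\pmod 6$ the matching between the periodic pattern and the short cap is straightforward; for $n\equiv 1\pmod 6$ the asymmetric $S_7$-style cap containing the three closely-spaced inner vertices $v_1,v_2,v_3$ forces a more careful choice of the join index and possibly a replacement of one periodic pair with a boundary pair, so that $|S_n|=f(n-6)+4$ is preserved while avoiding any over-domination of $v_0$ or under-domination of a vertex near column $n-1$.
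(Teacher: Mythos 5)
Your overall strategy---an explicit periodic pattern of ``middle pairs'' $\{u_j,v_j\}$ with $j\equiv 1\ [3]$, supplemented by a short cap for the residues where $3\nmid n$---is essentially the paper's own construction, and your sets for $n\equiv 0,2,3,4,5\ [6]$ do work and have size $f(n)$. (The induction-in-steps-of-six framing is inessential: you never actually transfer a dominating set from $P(n-6,2)$ to $P(n,2)$; what you really give is a direct construction for each $n$, which is exactly what the paper does.)

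The genuine gap is the case $n\equiv 1\ [6]$, which is also the only delicate one, since there the budget is $2\lfloor n/3\rfloor+1$, i.e.\ only \emph{one} vertex beyond the periodic count. The construction you sketch there does not work as described, and you leave the repair unspecified (``possibly a small further adjustment near the join''). Concretely, for $n=13$ take your $S_7$-style cap $\{u_0,v_1,v_2,v_3,u_4\}$ on columns $0,\dots,4$; to stay within $f(13)=9$ the periodic part may contribute only two pairs. Placing them at columns $7$ and $10$ leaves $v_6$ with no neighbour in $S$ (its neighbours $v_4$, $v_8$, $u_6$ are all outside $S$); placing them at columns $6$ and $9$ leaves $v_{10}$ and $u_{11}$ undominated. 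The obstruction is structural: the right boundary of the $S_7$-cap requires \emph{both} $u_{c+7}$ (to dominate $u_{c+6}$) and $v_{c+8}$ (to dominate $v_{c+6}$), which a single aligned pair cannot supply. The paper resolves this with a specific non-obvious local pattern (two consecutive blocks carrying five vertices arranged asymmetrically between the inner and outer cycles); without exhibiting such a pattern and verifying it at both interfaces, the upper bound for $n\equiv 1\ [6]$ is not established, and hence the lemma is not proved in full.
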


\begin{proof}
  To prove that $f(n)$ is an upper bound of $\OTDN(P(n,2))$, we give
  in Fig.~\ref{fig:upperBound} the corresponding construction for each
  case. For $n \equiv 0,3[6]$, we choose the middle pair of nodes of
  each block. For the cases $n \equiv 2,4,5[6]$, we do the same as the
  previous case by choosing the middle pair of nodes of each block.
  Then, we add two dominating nodes as depicted in red in
  Fig.~\ref{fig:upperBound}. For the case $n \equiv 1[6]$, we choose
  two nodes from each block as shown in Fig.~\ref{fig:upperBound}
  except in the the two successive blocks preceding the block with
  only two nodes. In these two blocks we choose five nodes as depicted
  in Fig.~\ref{fig:upperBound}. This means that we have $2n/3$ nodes
  plus one additional dominating node.
\end{proof}

\begin{figure}
\begin{center} 
\vspace{-14mm}
\subfigure[{Case $n \equiv 0,3[6]$}]{\includegraphics[scale=0.43]{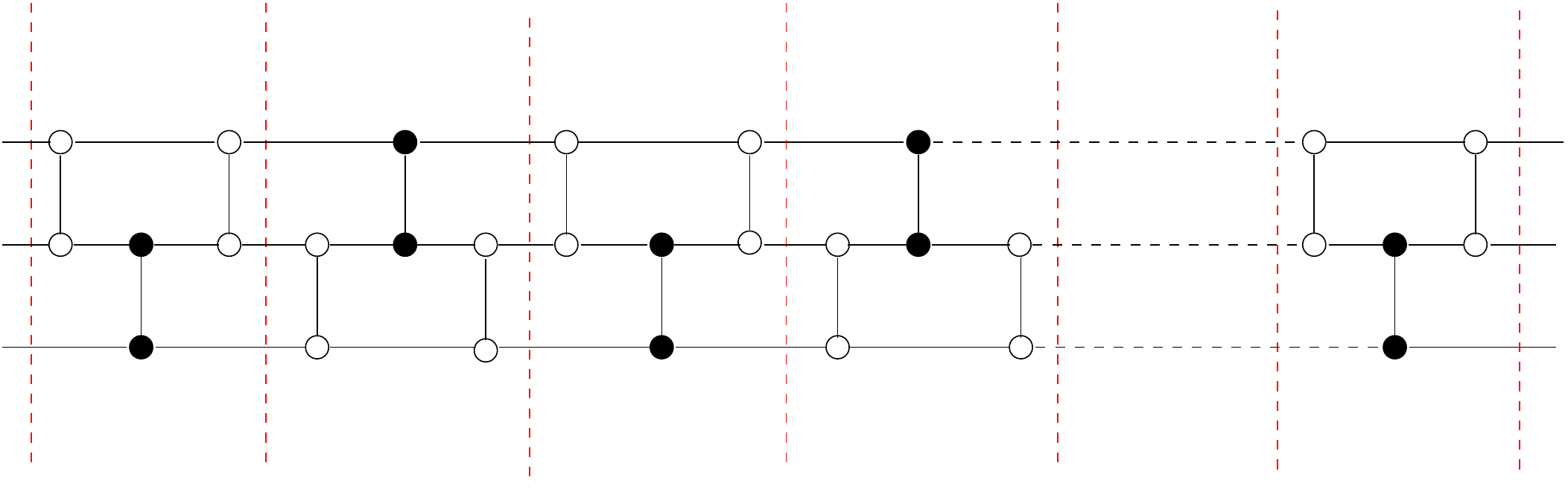}}
\subfigure[{Case $n \equiv 2[6]$}]{\includegraphics[scale=0.45]{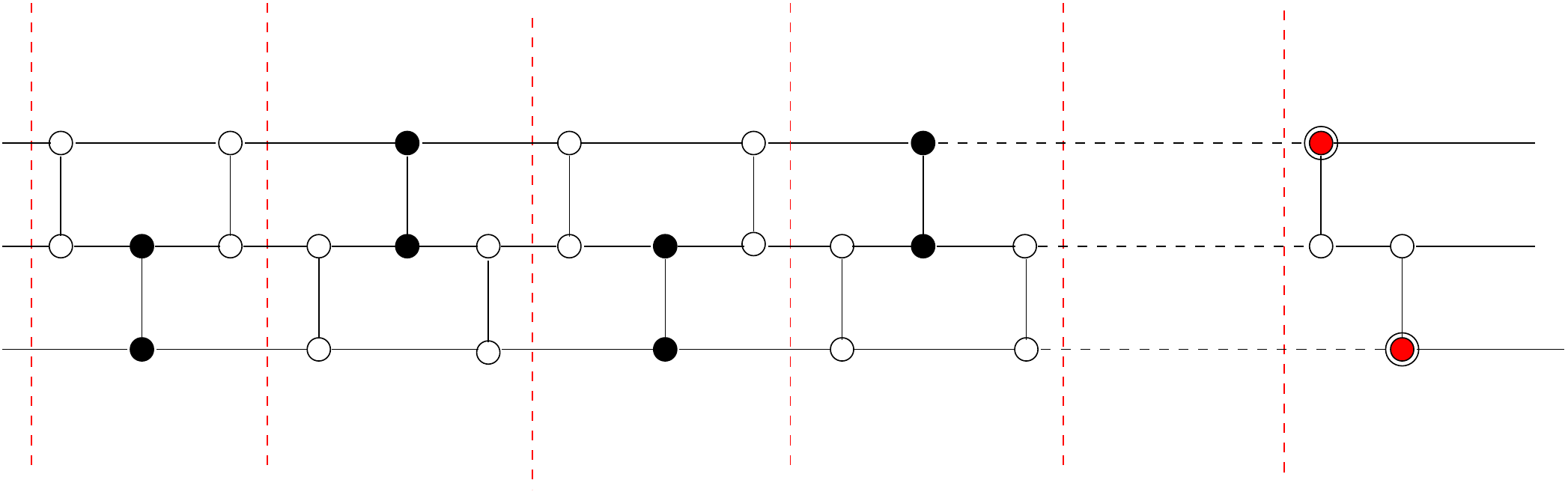}}
\subfigure[{Case $n \equiv 4[6]$}]{\includegraphics[scale=0.42]{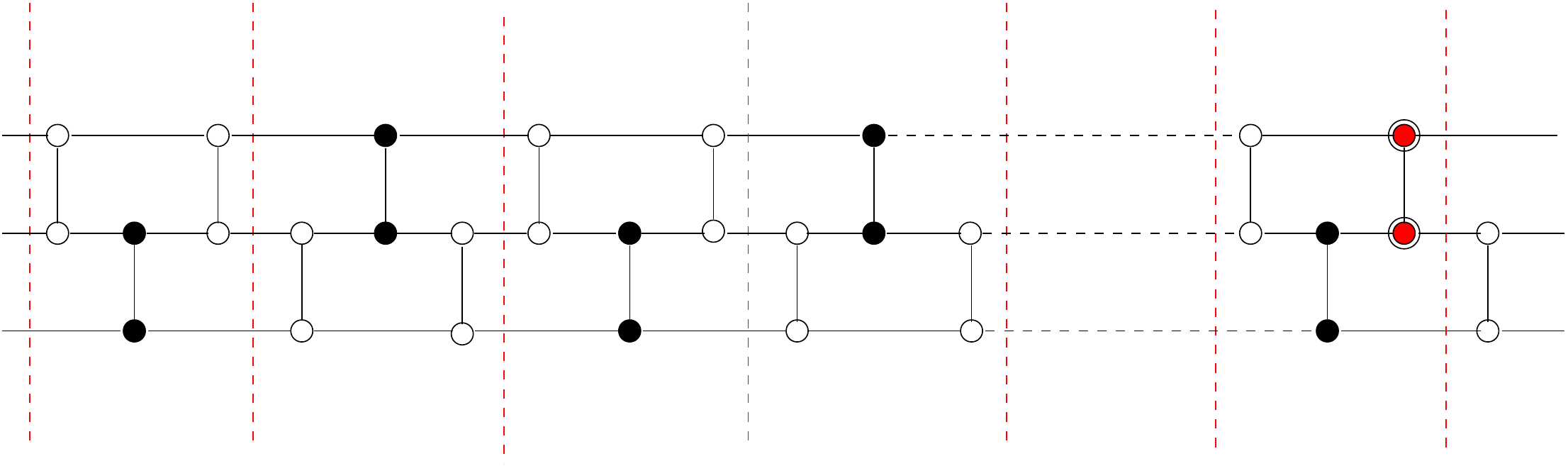}}
\subfigure[{Case $n \equiv 5[6]$}]{\includegraphics[scale=0.4]{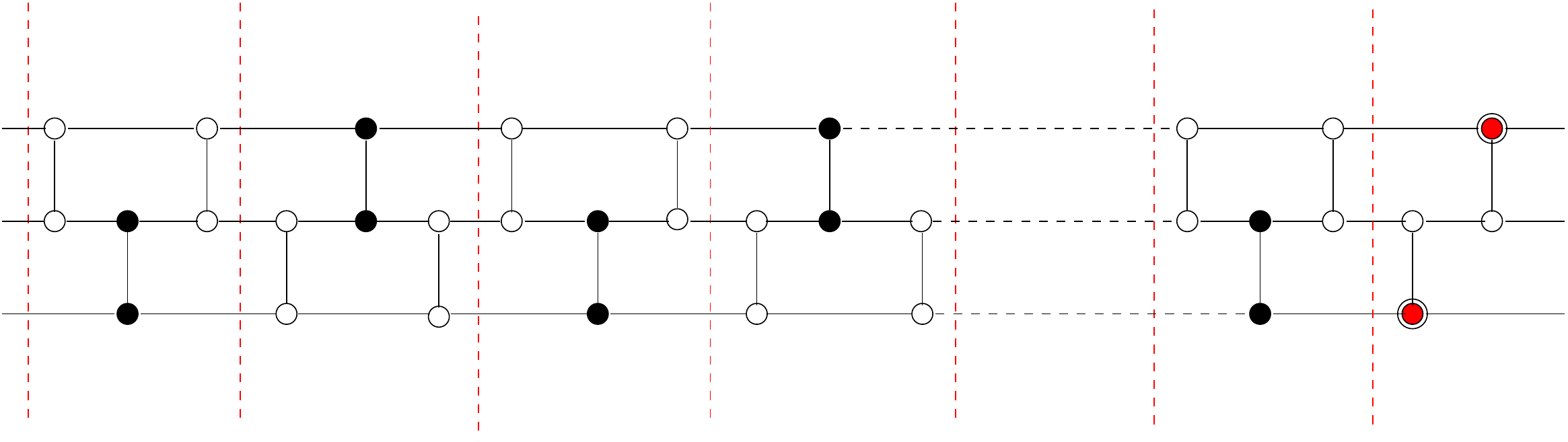}}
\subfigure[{Case $n \equiv 1[6]$}]{\includegraphics[scale=0.38]{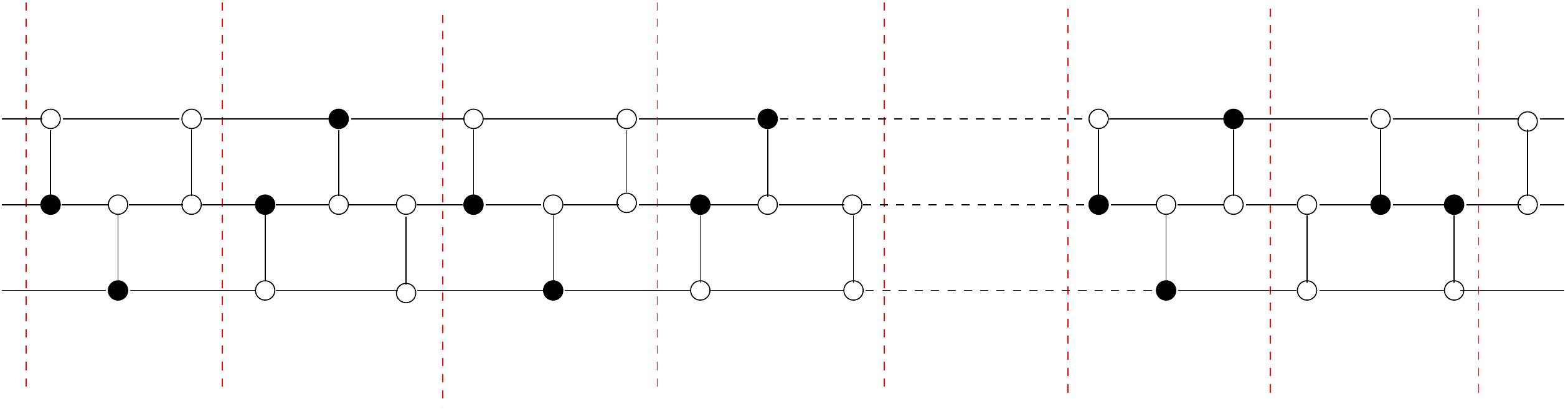}}
\end{center}
\vspace{-6mm}\caption{$f(n)$ is an upper bound of $\OTDN(P(n,2))$ for all $n > 12$}
\label{fig:upperBound}
\end{figure}

Thus, it suffices to prove that $f(n)$ is a lower bound. Assume that
there exists a minimal $\OT$-dominating set $S$ of $P(n,2)$ with
$|S|<f(n)$. Lemma~\ref{lem:smallCases} yields $n\ge 12$. The remaining
proof is split into two parts depending on whether ${\cal B}_1(S)$ is
empty or not.

\subsection{Case ${\cal B}_1(S) = \emptyset$}
\label{sec:case-cal-b_1s}
The vertices of $P(n,2)$ are grouped into $n$ pairs $p_i=\{v_i,u_i\}$ as
depicted in Fig.~\ref{fig:pairs}. Since ${\cal B}_1(S)= \emptyset$
this means that for $i=1,...,n$
 \[\gamma_S(p_i) + \gamma_{S}(p_{i+1}) + \gamma_S(p_{i+2}) \geq
2\]  (subscripts are always taken modulo $n$). Note
that $\gamma_S(p_i) \leq 2$ for all $i$. Consider the following system
of inequalities for integer valued variables $x_0,\ldots,x_{n-1}$.
  \begin{equation}
\begin{array}{rll}
 x_i & \leq&  2   \\
x_i + x_{i+1} + x_{i+2} &\ge&  2 \\
\sum_{i=0}^{n-1} x_i &<& f(n)
\end{array}\label{basiceq}
\end{equation} 

\begin{figure}
\centering
\includegraphics[scale=0.5]{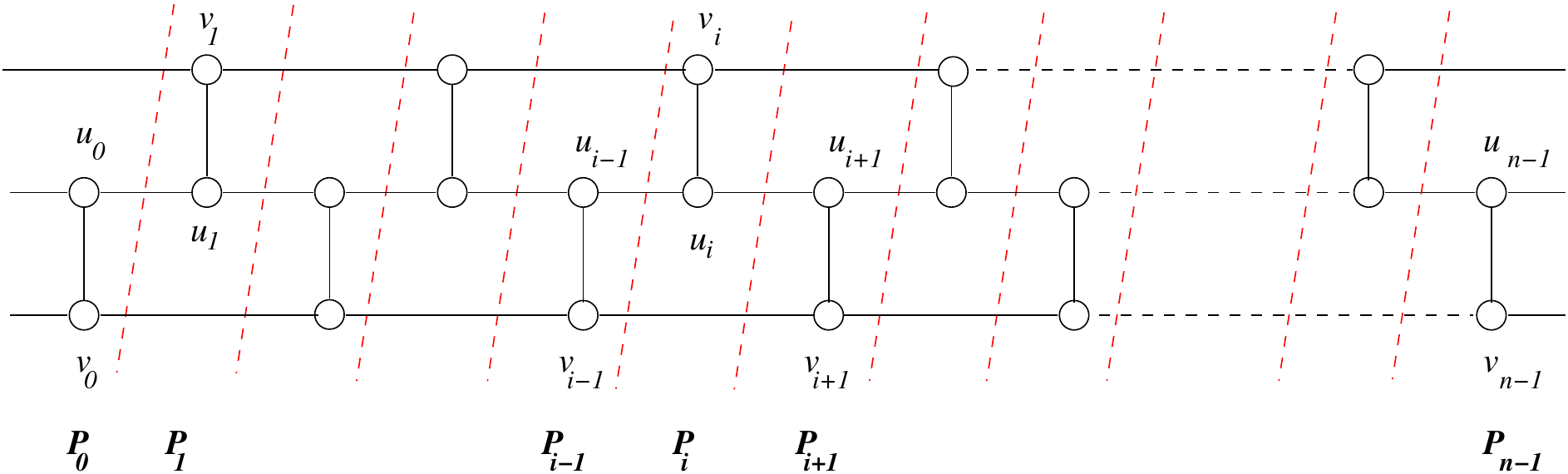}
\caption{The partition of $P(n,2)$ into $n$ pairs.}
 \label{fig:pairs}
\end{figure}

Note that $x_i=\gamma_S(p_i)$ is a solution for these equations. We
will show that no solution of Eq.~(\ref{basiceq}) is induced by a
$\OT$-dominating set. 

\begin{lem}\label{lem:basics2}
  Let $x$ be a solution of Eq.~(\ref{basiceq}) with $x_i=2$ for some
  $i$. Let $\hat{x}=x$ except $\hat{x}_{i+1}=\hat{x}_{i+2}=0$ and
  $\hat{x}_{i+3}=2$. Then $\hat{x}$ is a solution of
  Eq.~(\ref{basiceq}) with $\sum_{i=0}^{n-1} \hat{x}_i \le
  \sum_{i=0}^{n-1} x_i$.
\end{lem}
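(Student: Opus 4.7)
The plan is to verify the three conditions of Eq.~(\ref{basiceq}) for $\hat{x}$ directly and then read off the sum inequality as a consequence. Since $\hat{x}$ agrees with $x$ outside positions $i+1, i+2, i+3$, and the new entries there lie in $\{0,2\}$, the bound $\hat{x}_j \le 2$ is immediate, so the first inequality holds without work.

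For the triple-sum condition I would only need to check those triples $\hat{x}_j + \hat{x}_{j+1} + \hat{x}_{j+2}$ whose index range touches a modified coordinate, namely $j \in \{i-1, i, i+1, i+2, i+3\}$; all other triples agree with the corresponding triples of $x$ and are therefore already $\ge 2$. The triples at $j = i-1$ and $j = i+3$ inherit the blocks $\hat{x}_i = 2$ and $\hat{x}_{i+3} = 2$, respectively, so each already reaches the threshold on its own. The triples at $j = i$ and $j = i+1$ evaluate to $2 + 0 + 0 = 2$ and $0 + 0 + 2 = 2$. Finally, the triple at $j = i+2$ is $0 + 2 + \hat{x}_{i+4} \ge 2$. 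So the second condition is preserved at every index.

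For the weight, the change is
\[
\sum_{j=0}^{n-1} \hat{x}_j \;-\; \sum_{j=0}^{n-1} x_j \;=\; 2 \;-\; (x_{i+1} + x_{i+2} + x_{i+3}),
\]
which is $\le 0$ precisely because the original triple constraint on $x$, applied at index $i+1$, yields $x_{i+1} + x_{i+2} + x_{i+3} \ge 2$. Combined with the hypothesis $\sum_j x_j < f(n)$, this gives $\sum_j \hat{x}_j < f(n)$, preserving the third inequality as well.

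There is no real obstacle here: the statement is a purely mechanical case check. The only point worth highlighting in writing is that the sum inequality $\sum_j \hat{x}_j \le \sum_j x_j$ is in effect a restatement of the triple-sum hypothesis applied at the right index, which is exactly why the substitution rule functions as a normal-form rewriting step that can be iterated in the subsequent arguments.
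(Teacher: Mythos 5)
Your proof is correct and follows the same route as the paper's: the paper simply declares the first two families of inequalities ``obviously'' preserved and then derives the sum bound from $x_{i+1}+x_{i+2}+x_{i+3}\ge 2$, exactly as you do. Your version merely spells out the case check of the affected triples (implicitly using, as the paper does, that the variables are nonnegative), so there is no substantive difference.
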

\begin{proof}
  Obviously $\hat{x}$ satisfies the first two sets of inequalities.
  Note that $x_{i+1}+x_{i+2}+x_{i+3} \ge 2$ since $x$ is a solution of
  Eq.~(\ref{basiceq}). Thus,
  $\hat{x}_{i+1}+\hat{x}_{i+2}+\hat{x}_{i+3} \le
  x_{i+1}+x_{i+2}+x_{i+3}$.
\end{proof}


\begin{lem}\label{lem:case2}
  Let $x$ be any  solution of Eq.~(\ref{basiceq}). Then
  $x_i\le 1$ for $i = 0,\ldots, n-1$.
\end{lem}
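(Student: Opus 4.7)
The plan is to argue by contradiction: suppose $x$ is a solution of Eq.~(\ref{basiceq}) with $x_j=2$ for some $j$, which by cyclic relabeling we take to be $j=0$. The strategy is to iterate Lemma~\ref{lem:basics2} starting at position $0$. Each application takes a ``$2$'' at some position $i$ and produces a new solution with zeros at $i+1,i+2$ and a new ``$2$'' at $i+3$; reapplying at position $3$, then at $6$, and so on, after $k$ applications yields a solution $\hat{x}$ with $\hat{x}_{3\ell}=2$ for $0\le \ell\le k$ and $\hat{x}_j=0$ for $j\in\{1,2,\ldots,3k\}\setminus\{3,6,\ldots,3k\}$, while $\hat{x}_j=x_j$ for $j>3k$. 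Because Lemma~\ref{lem:basics2} guarantees the total never increases, the strict inequality $\sum_i\hat{x}_i<f(n)$ is preserved throughout.

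Writing $m=\lfloor n/3\rfloor$ and stopping after $m-1$ iterations, the ``processed'' positions $0,1,\ldots,3m-3$ contribute exactly $2m$ to $\sum_i\hat{x}_i$. I would then split into three cases by $n\bmod 3$ and analyze the contribution of the remaining ``tail'' positions, which retain their original values from $x$ and therefore still satisfy the original constraints. If $n=3m$ there are two tail positions whose contribution is trivially at least $0$, so $\sum_i\hat{x}_i\ge 2m=2n/3=f(n)$, contradicting $\sum_i\hat{x}_i<f(n)$. If $n=3m+1$ there are three tail positions $n-3,n-2,n-1$, and the original constraint at $i=n-3$ forces their contribution to be at least $2$, so $\sum_i\hat{x}_i\ge 2m+2$; since $f(n)=2m+1$ for $n\equiv 1\pmod 6$ and $f(n)=2m+2$ for $n\equiv 4\pmod 6$, this contradicts $\sum_i\hat{x}_i<f(n)$. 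If $n=3m+2$ there are four tail positions $n-4,n-3,n-2,n-1$, and the two overlapping constraints at $i=n-4$ and $i=n-3$ force their sum to be at least $2$; since $f(n)=2m+2$ for $n\equiv 2,5\pmod 6$, this gives $\sum_i\hat{x}_i\ge 2m+2=f(n)$, another contradiction.

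The main obstacle I anticipate is the last sub-step of the $n=3m+2$ case, namely the elementary fact that any four values $a,b,c,d\in\{0,1,2\}$ with $a+b+c\ge 2$ and $b+c+d\ge 2$ must satisfy $a+b+c+d\ge 2$. This follows from a short case split on $b+c$: if $b+c\ge 2$ the conclusion is immediate, and if $b+c\le 1$ the two constraints force $a\ge 1$ and $d\ge 1$, so $a+b+c+d\ge 2$ again. Apart from this, the argument is routine bookkeeping: verifying which positions are processed versus tail after $m-1$ applications of Lemma~\ref{lem:basics2}, and matching the resulting lower bound on $\sum_i\hat{x}_i$ against the correct value of $f(n)$ in each residue class modulo $6$.
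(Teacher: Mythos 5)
Your proof is correct and follows essentially the same route as the paper: repeatedly apply Lemma~\ref{lem:basics2} to normalize the solution into the $(2,0,0)$ pattern and then derive a contradiction with $\sum_i x_i < f(n)$ by cases on $n \bmod 3$ (refined by $n \bmod 6$). The only difference is cosmetic: the paper iterates one more step so the final position is forced to $2$, whereas you stop after $m-1$ iterations and bound the tail directly from the cyclic constraints, which if anything is a slightly cleaner way to finish.
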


\begin{proof}
  Let $x$ be any solution of Eq.~(\ref{basiceq}) such that $x_i=2$ for
  some $i$. Without loss of generality $i=0$. By
  Lemma~\ref{lem:basics2} there exist a solution which coincides with
  $x$ except $x_{1}=x_{2}=0$ and $x_{3}=2$. Repeatedly applying
  Lemma~\ref{lem:basics2} proves that there exits a solution $\hat{x}$
  of Eq.~(\ref{basiceq}) with $\hat{x}_{k}=2$ and
  $\hat{x}_{k+1}=\hat{x}_{k+2}=0$ for $k=0,1,\ldots, \lfloor n/3
  \rfloor$. If $n \equiv 0\,[3]$ then $\sum_{i=0}^{n-1} \hat{x}_i
  =2n/3 = f(n)$, which is impossible. Suppose $n \equiv 1\,[3]$. Then
  $\hat{x}_{n-1}=2$ otherwise the second constraint for $i=n-2$ would
  be violated. This leads to the contradiction $\sum_{i=0}^{n-1}
  \hat{x}_i =2\lfloor n/3 \rfloor +2 \ge f(n)$. Hence, $n \equiv
  2\,[3]$. Then $\hat{x}_{n-2}=2$ otherwise the second constraint for
  $i=n-2$ is not satisfied. Again this leads to the contradiction
  $\sum_{i=0}^{n-1} \hat{x}_i =2\lfloor n/3 \rfloor +2 = f(n)$. This
  proves $x_i \le 1$ for all $i$.
\end{proof}

\begin{lem}\label{lem:case1}
  If $n \not\equiv 4\,[6]$ then Eq.~(\ref{basiceq}) has no solution.
  If $n \equiv 4\,[6]$ then any solution of Eq.~(\ref{basiceq}) is a
  rotation of the solution $(1,1,0,1,1,0, \ldots, 1,1,0,1)$.
\end{lem}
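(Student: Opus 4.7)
The plan is to leverage Lemma~\ref{lem:case2}, which forces every solution of Eq.~(\ref{basiceq}) to take only the values $0$ or $1$. Under this restriction the constraint $x_i+x_{i+1}+x_{i+2}\ge 2$ translates into a clean combinatorial statement: in every window of three consecutive positions at most one entry is $0$. Equivalently, any two positions where $x$ vanishes are at cyclic distance at least $3$.

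Given that, I would count as follows. Let $k$ denote the number of indices $i$ with $x_i=0$, and let $g_1,\dots,g_k$ be the cyclic gaps between successive zeros. Then $g_j\ge 3$ for all $j$ and $\sum_j g_j = n$, so $k\le \lfloor n/3\rfloor$. Since $x_i\in\{0,1\}$, the total satisfies
\[
\sum_{i=0}^{n-1} x_i \;=\; n-k \;\ge\; n-\Big\lfloor\tfrac{n}{3}\Big\rfloor.
\]
I would then compare this lower bound with $f(n)$ by residue class of $n$ modulo $6$. For $n\equiv 0,3\,[6]$ one gets $n-k\ge 2n/3=f(n)$; for $n\equiv 1\,[6]$ one gets $n-k\ge 2\lfloor n/3\rfloor+1=f(n)$; and for $n\equiv 2,5\,[6]$ (i.e.\ $n\equiv 2\,[3]$) one gets $n-k\ge 2\lfloor n/3\rfloor+2=f(n)$. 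In each of these four cases the sum of the $x_i$ is at least $f(n)$, which contradicts the third line of Eq.~(\ref{basiceq}); hence no solution exists.

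Only the case $n\equiv 4\,[6]$ survives. Here $f(n)=2\lfloor n/3\rfloor+2$, while the lower bound on the sum is $2\lfloor n/3\rfloor+1$. Thus a solution must achieve equality, meaning $k=\lfloor n/3\rfloor$ zeros with gaps $g_1,\dots,g_k$ all at least $3$ and summing to $n=3k+1$. The unique multiset realising this is one gap equal to $4$ and the remaining $k-1$ gaps equal to $3$. Up to rotation this determines the sequence as $(1,1,0,1,1,0,\dots,1,1,0,1)$, which gives exactly the rotation family claimed.

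The argument is essentially routine once the $\{0,1\}$-reduction from Lemma~\ref{lem:case2} and the ``gap $\ge 3$'' reformulation are in place; the only mild obstacle is keeping the residue-class bookkeeping straight, in particular verifying that the case $n\equiv 4\,[6]$ is the single one where $n-\lfloor n/3\rfloor$ falls strictly below $f(n)$, and checking that the resulting ``one long gap'' configuration indeed satisfies all window constraints cyclically.
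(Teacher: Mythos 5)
Your proposal is correct and follows essentially the same route as the paper: after the $\{0,1\}$-reduction from Lemma~\ref{lem:case2}, both arguments count the zeros and the runs of ones between them (your gaps $g_j\ge 3$ are just the paper's run lengths $l_j\ge 2$ plus one), compare the resulting lower bound on $\sum_i x_i$ with $f(n)$ by residue class, and extract the unique ``one gap of $4$, rest of $3$'' configuration in the case $n\equiv 4\,[6]$. The bookkeeping via $k\le\lfloor n/3\rfloor$ is a slightly cleaner packaging of the paper's identity $3\sum_i x_i = 2n+\sum_j(l_j-2)$, but the substance is identical.
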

\begin{proof}
  Let $x$ be any solution of Eq.~(\ref{basiceq}). By
  Lemma~\ref{lem:case2} $x_i\le 1$ for $i = 0,\ldots, n-1$. Denote by
  $n_0$ the number of variables with $x_i=0$. Thus $\sum_{i=0}^{n-1}
  x_i = n - n_0$. Note that if $x_{i}=0$ then either $x_{i+1}=1$ or
  $x_{i-1}=1$, thus no adjacent variables have both value $0$. Denote
  by $l_1,\ldots, l_{n_0}$ the lengths of maximal sequences of
  consecutive $x_i$ with $x_i=1$. Note that $l_j \ge 2$ for all $j$.
  Then \[\sum_{i=0}^{n-1} x_i = \sum_{j=1}^{n_0} l_j = 2n_0 +
  \sum_{j=1}^{n_0} (l_j-2).\] This implies \[3 \sum_{i=0}^{n-1} x_i =
  2n + \sum_{j=1}^{n_0} (l_j-2).\] If $n \equiv 0\,[3]$ then
  $\sum_{i=0}^{n-1} x_i \ge 2n/3 = f(n)$. A contradiction. If $n
  \equiv 2\,[3]$ then again this leads to the contradiction
  $\sum_{i=0}^{n-1} x_i = 2\lfloor n/3 \rfloor + (4 + \sum_{j=1}^{n_0}
  (l_j-2))/3 \ge 2\lfloor n/3 \rfloor + 2 \ge f(n)$. Finally if $n
  \equiv 1\,[6]$ then $\sum_{i=0}^{n-1} x_i = 2\lfloor n/3 \rfloor +
  (2 + \sum_{j=1}^{n_0} (l_j-2))/3\ge 2\lfloor n/3 \rfloor + 1 =f(n)$.
  This contradiction proves that for $n \not\equiv 4\,[6]$
  Eq.~(\ref{basiceq}) has no solution.

  Let $n \equiv 4\,[6]$. Then $\sum_{i=0}^{n-1} x_i = 2\lfloor n/3
  \rfloor + (2 + \sum_{j=1}^{n_0} (l_j-2))/3 <f(n) = 2\lfloor n/3
  \rfloor + 1$ implies $3 = 2 + \sum_{j=1}^{n_0} (l_j-2)$. This yields
  that there exists $i$ such that $l_i=3$ and $l_j=2$ for all
  $j\not=i$. Thus, $x$ is a rotation of the solution $(1,1,0,1,1,0,
  \ldots, 1,1,0,1)$.
\end{proof}

\begin{lem}
  The solution $x=(1,1,0,1,1,0, \ldots, 1,1,0,1)$ is not induced by a
  $\OT$-do\-minating set of $P(n,2)$.
\end{lem}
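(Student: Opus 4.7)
The plan is to assume for contradiction that a $\OT$-dominating set $S$ of $P(n,2)$ induces the given solution $x = (1,1,0,1,1,0,\ldots,1,1,0,1)$, and to reach a contradiction by combining a chain propagation around the cycle with an upper-bound constraint at the interior of the run of three consecutive $1$'s. By rotational symmetry of $P(n,2)$ I may assume $n = 6k+4$, the zeros of $x$ lie at positions $3, 6, \ldots, 6k+3$, and the run of three $1$'s of $x$ is at $\{0,1,2\}$.

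For each index $p$ set $e_p = 1$ if $u_p \in S$ and $e_p = 0$ otherwise, so $e_p = 0$ at every $0$-position of $x$. At each $0$-position $i$, the dominations of $u_i$ and of $v_i$ translate respectively to $e_{i-1}+e_{i+1} \ge 1$ and $e_{i-2}+e_{i+2} \le 1$ (the latter because $v_{i\pm 2}\in S$ iff $e_{i\pm 2}=0$). Combining the second inequality at $i$ with the first at $i+3$ gives the key implication $e_{i-2} \le e_{i+4}$. Iterating through the odd $0$-positions $3, 9, 15, \ldots, 6k+3$ produces the chain $e_1 \le e_7 \le \cdots \le e_{6k+1} \le e_3 = 0$, forcing those variables to $0$; the lower inequality at the even $0$-positions then yields $e_5 = e_{11} = \cdots = e_{6k-1} = 1$.

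Only the even $1$-positions remain to be determined. Here I would invoke the additional constraint from $u_1 \notin S$: since $e_1 = 0$ means $v_1 \in S$, the neighborhood $\{u_0, u_2, v_1\}$ of $u_1$ can contain at most two members of $S$, which gives $e_0 + e_2 \le 1$. A case split on $e_4$ then closes the argument. If $e_4 = 1$, the even chain $e_4 \le e_{10} \le \cdots \le e_{6k-2} \le e_0$ forces $e_0 = 1$ and hence $e_2 = 0$, but the even decreasing chain $e_2 \ge e_8 \ge \cdots \ge e_{6k+2} \ge e_4$ forces $e_2 \ge 1$, a contradiction. If $e_4 = 0$, the domination of $u_3$ forces $e_2 = 1$ and hence $e_0 = 0$; but then the three neighbors $v_0, v_4, u_2$ of $v_2 \notin S$ all lie in $S$, violating the upper bound $|N(v_2)\cap S| \le 2$.

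The main obstacle is that the chain argument alone only pins down the odd $1$-positions, so it does not by itself yield a contradiction. The argument has to be closed by invoking the non-generic upper bound $e_0 + e_2 \le 1$ coming from the interior of the run of three, and showing that this constraint, together with the two monotone chains on the even side, precludes both values of $e_4$.
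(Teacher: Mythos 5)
Your translation into the Boolean system $e_p=[u_p\in S]$, with the two local constraints $e_{i-1}+e_{i+1}\ge 1$ and $e_{i-2}+e_{i+2}\le 1$ at each $0$-position, is correct, and it is a more algebraic route than the paper's proof (which instead does a four-way case split on which vertex of each of the two pairs $p_0,p_1$ lies in $S$ and chases the forced choices around the cycle). However, there is a concrete gap: the wrap-around links of your chains are unjustified. The key implication $e_{i-2}\le e_{i+4}$ requires \emph{both} $i$ and $i+3$ to be $0$-positions, and since $n\equiv 4\,[6]$ a step of $+6$ around the cycle does not return to the same residue class, so the chains run into the triple run instead of closing up. Specifically, the last link $e_{6k+1}\le e_3$ would need the inequality $e_1+e_3\ge 1$ at position $i+3=6k+6\equiv 2$; but $2$ is a $1$-position in the interior of the run of three, where $u_2$ may be dominated by $v_2$, so no such inequality exists. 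From the odd subsystem ($e_1+e_5\le 1$, $e_5+e_7\ge 1$, $e_7+e_{11}\le 1,\dots,e_{6k+1}+e_1\le 1$) one can only conclude $e_1=0$: the assignment $e_1=0$, $e_{6m+1}=1$, $e_{6m-1}=0$ for $m\ge 1$ satisfies every inequality you derive, so the odd variables are \emph{not} forced and this second branch is never treated.

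The same defect recurs inside your subcase $e_4=1$: the link $e_{6k+2}\ge e_4$ would again need a $0$-position at $2$, and the assignment $e_0=1$, $e_2=0$, $e_{6m+4}=1$, $e_{6m+2}=0$ satisfies all of the constraints you state. The actual contradiction there is that $v_0$ acquires three dominators $u_0$, $v_2$, $v_{n-2}$, which is ruled out only by the upper-bound constraint at position $0$ --- the exact analogue, at the other end of the triple run, of the constraint $e_0+e_2\le 1$ that you do invoke at position $1$. (Your subcase $e_4=0$ is fine.) The proof is repairable: add the symmetric upper-bound constraints at positions $0$ and $2$, and carry out the even-side analysis for both surviving odd branches (or eliminate one by the reflection automorphism of $P(n,2)$ fixing position $1$). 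But as written, both the forcing of the odd variables and the closure of the case $e_4=1$ rest on chain links that do not exist.
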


\begin{proof}
  Assume there exists a $\OT$-do\-minating set $S$ such that
  $x_i=\gamma_S(b_i)$. Two vertices of the first two pairs must be in
  $S$. All four possibilities lead to a contradiction as shown in the
  following.


Case 1. $v_{0}, u_1\in S$ (see Fig.~\ref{fig:case1}). Since $S$ is
$\OT$-do\-minating the lower vertex of the last pair $p_{n-1}$ must be
in $S$. Now the same argument implies that the middle vertex of pair
$p_3$ must be in $S$. This yields that the lower vertex of pair $p_7$
must be in $S$, otherwise the lower vertex of pair $p_5$ is not
dominated. Repeating this argument shows that the lower vertex of pair
$p_{n-3}$ must be in $S$ (note that $n \equiv 4\,[6]$). Thus, $S$ does
not dominate the middle vertex of pair $p_{n-2}$. Contradiction.

\begin{figure}[htb]
\centering
\includegraphics[scale=0.5]{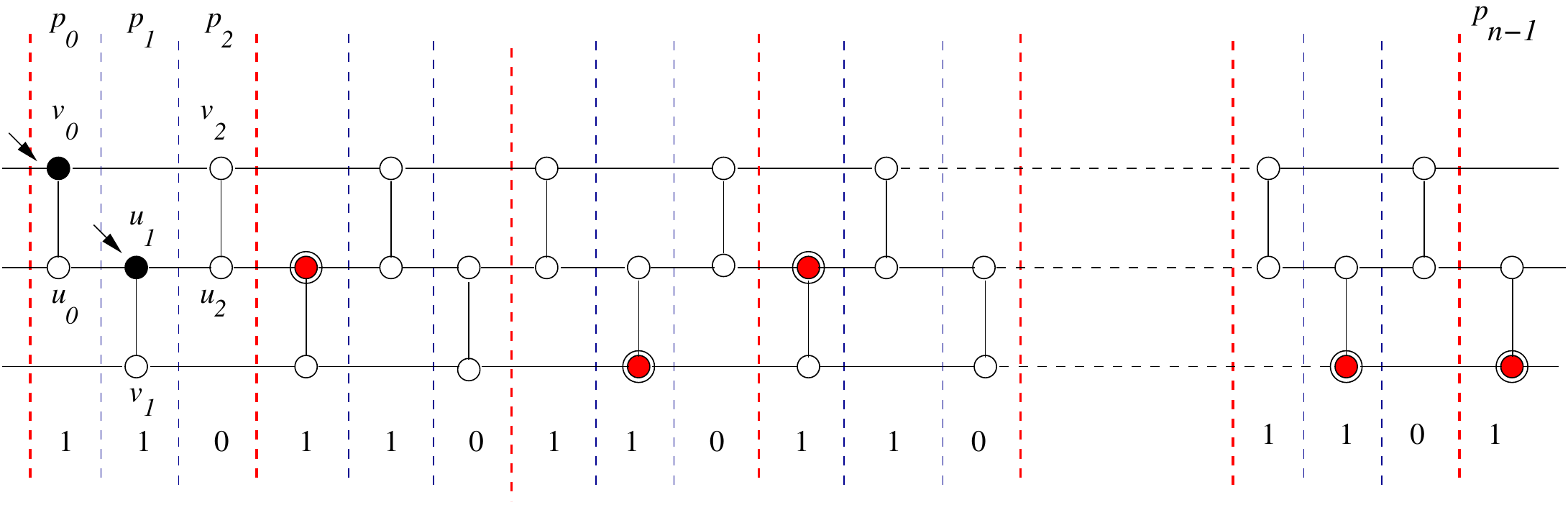}
\caption{If $v_{0}, u_1\in S$  then vertices depicted in red  must also
  be in $S$.}
 \label{fig:case1}
\end{figure}

Case 2. $u_{0}, v_1\in S$ (see Fig.~\ref{fig:case2}).
In order to dominate the middle vertex of pair $p_2$ the middle vertex
of $p_3$ must be in $S$. Similarly the lower vertex of pair $p_7$
must be in $S$ to dominate the lower vertex of $p_5$. This results in
the pattern shown in Fig.~\ref{fig:case2}. This is impossible because
all three neighbors of the lower vertex of $p_{n-1}$ are in $S$.

\begin{figure}[htb]
\centering
\includegraphics[scale=0.5]{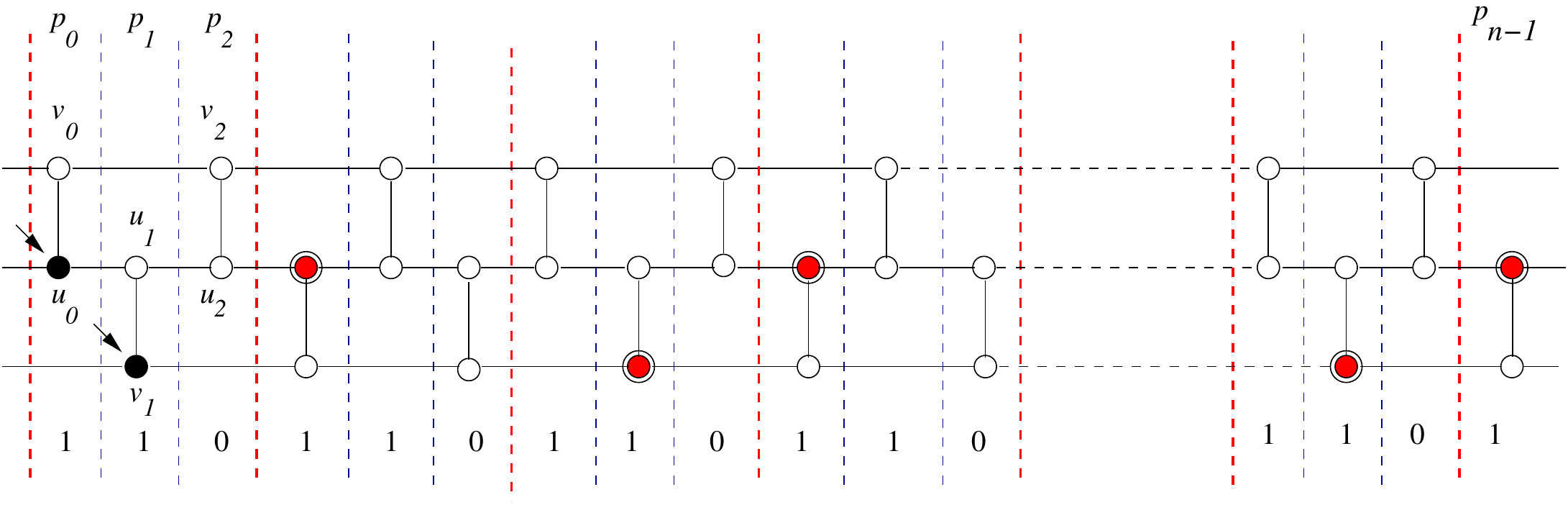}
\caption{If $u_{0}, v_1\in S$  then vertices depicted in red  must also
  be in $S$.}
 \label{fig:case2}
\end{figure}

Case 3. $u_{0}, u_1 \in S$. The same reasoning as above leads
to the situation depicted in Fig.~\ref{fig:case3}. This gives also
rise to a contradiction since the upper vertex of pair $p_{n-2}$ is
not dominated.

\begin{figure}[htb]
\centering
\includegraphics[scale=0.5]{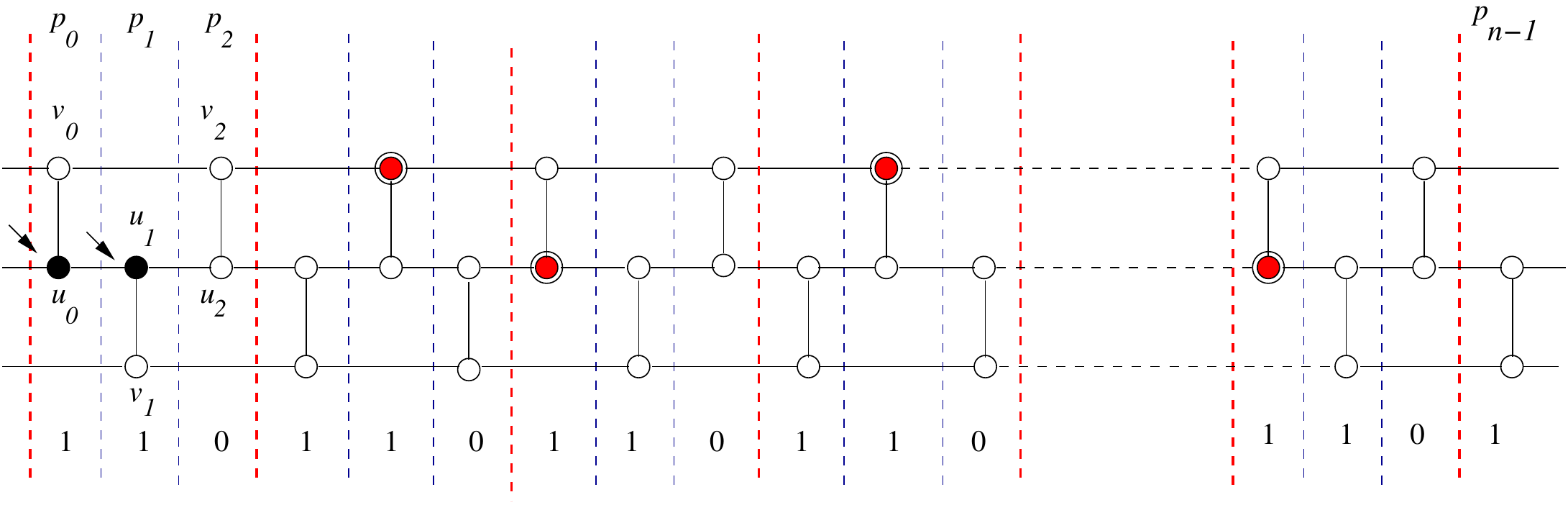}
\caption{If $u_{0}, u_1\in S$  then vertices depicted in red  must also
  be in $S$.}
 \label{fig:case3}
\end{figure}

Case 4. $v_{0}, v_1\in S$. The same reasoning as above leads
to the situation depicted in Fig.~\ref{fig:case4}. This is impossible
because all three neighbors of the lower vertex of $p_{n-1}$ are in
$S$.

\begin{figure}[htb]
\centering
\includegraphics[scale=0.5]{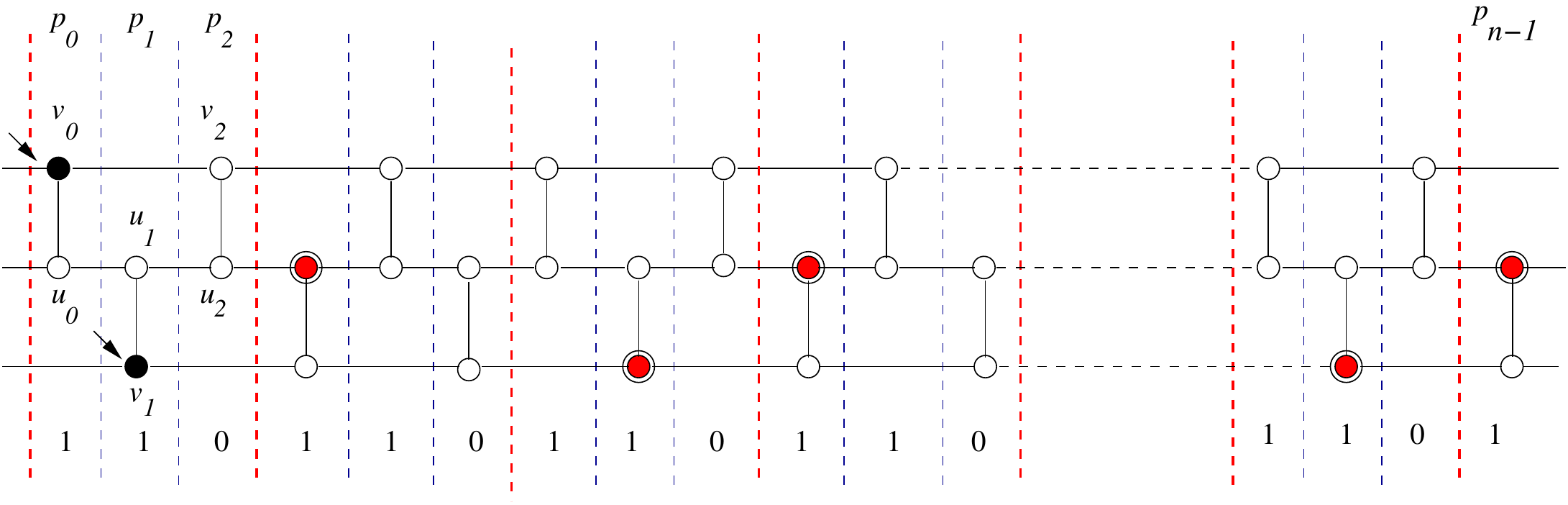}
\caption{If $v_{0}, v_1\in S$  then vertices depicted in red  must also
  be in $S$.}
 \label{fig:case4}
\end{figure}
\end{proof}

This concludes the proof of Theorem~\ref{thm:thm1} for the case ${\cal B}_1(S)
= \emptyset$.


\subsection{Case ${\cal B}_1(S) \not= \emptyset$}
\label{sec:case-cal-not-b_1s}

The following simple observation is based on the fact that the central
vertex of a block $b$ can only be dominated by a vertex within $b$.

\begin{lem}\label{lem:fourCases}
  Any positive block $b\in {\cal B}_1(S)$ corresponds to one of the
  four blocks shown in Fig.~\ref{img:1classif}. A similar result holds
  for negative blocks.
\end{lem}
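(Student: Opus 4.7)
The proof should rest on identifying which vertex of the block is the ``central'' one in the sense that all of its neighbors lie inside the block. For the block $b$ induced by $\{v_{i-1}, v_i, v_{i+1}, u_{i-1}, u_i, u_{i+1}\}$, the vertex $u_i$ has neighborhood $N(u_i)=\{u_{i-1}, u_{i+1}, v_i\}$, which is entirely contained in $b$. By contrast $v_i$ has neighbors $v_{i-2}, v_{i+2}, u_i$, and $v_{i\pm 2} \notin b$, so $v_i$ could in principle be dominated by a vertex outside $b$. This asymmetry is what forces the constraint.

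The plan is as follows. First I would remark that $S$ must dominate $u_i$, so either $u_i\in S$ or $S\cap N(u_i)\neq\emptyset$; in either case $S\cap \{u_{i-1},u_i,u_{i+1},v_i\}\neq\emptyset$. Since $\gamma_S(b)=1$, the single vertex of $S\cap b$ must then lie in $\{u_{i-1},u_i,u_{i+1},v_i\}$, giving exactly four candidates. I would then verify that each of these four candidates is individually consistent with the hypothesis $b\in\mathcal{B}_1(S)$: if $u_i\in S$ then $v_i$ is automatically dominated from within $b$; if instead the unique vertex of $S\cap b$ is $u_{i-1}$, $u_{i+1}$, or $v_i$, then $u_i$ is still dominated from within $b$ (and the external dominations of $v_{i-1},v_i,v_{i+1}$ remain unaffected by what happens inside $b$). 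These four configurations match the four pictures of Fig.~\ref{img:1classif}.

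For negative blocks the argument is identical: the role of ``central vertex forced to be dominated internally'' is again played by $u_i$, whose neighborhood lies inside $b$ regardless of the parity pattern of the indices of the $v_j$'s. The positive/negative distinction only changes which of the three $v_j$ in $b$ carry odd indices, but it does not affect $N(u_i)$, and hence the same four candidate configurations arise.

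I do not expect a significant obstacle here; the statement is essentially a consequence of one structural observation. The only thing to be careful about is making the asymmetry between $u_i$ and $v_i$ explicit, since one might at first be tempted to consider both as ``central'' and get a larger (and wrong) list of configurations. Writing out $N(u_i)\subseteq V(b)$ versus $N(v_i)\not\subseteq V(b)$ cleanly rules this out, and the counting $|S\cap b|=1$ then closes the case analysis immediately.
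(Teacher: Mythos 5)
Your argument is correct and is essentially the same as the paper's: the paper justifies this lemma with the single observation that the central vertex $u_i$ of a block can only be dominated from within the block (since $N[u_i]=\{u_{i-1},u_i,u_{i+1},v_i\}\subseteq b$), and with $|S\cap b|=1$ this forces the unique dominating vertex into one of those four positions. Your write-up just makes that observation and the resulting four-way case split explicit, including the (correct) remark that the positive/negative distinction does not affect $N[u_i]$.
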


\begin{figure*}[h]
  \hfill
  \subfigure[Type A]{\includegraphics{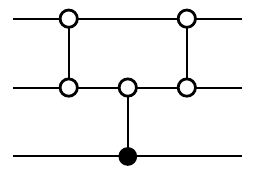}}
  \hfill
  \subfigure[Type B]{\includegraphics{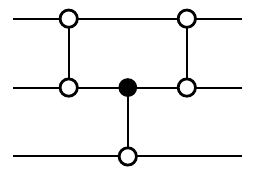}}
  \hfill
  \subfigure[Type C]{\includegraphics{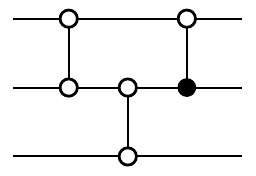}}
  \hfill
  \subfigure[Type D]{\includegraphics{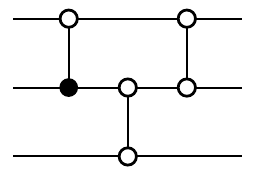}}
  \hfill\null\relax
  \caption{The four types of positive blocks with $\gamma_S(b)=1$.}
  \label{img:1classif}
\end{figure*}

In the following the four different types of blocks are considered
individually. 

\begin{lem} \label{noB} 
Let $S$ be a $\OT$-dominating set of $P (n,
  2)$ containing a block $b$ of type B and $n \geq 12$. Then there
  exists a $\OT$-dominating set $S'$ of $P (n, 2)$ not containing a
  block of type B such that $\lvert S' \rvert = \lvert S \rvert$.
\end{lem}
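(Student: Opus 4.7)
The plan is to show that a type-B block can always be converted locally into a block of another type by a single-vertex swap that preserves both the size of $S$ and the $[1,2]$-domination property. Let $b$ be a type-B block and let $x$ be the unique vertex of $S \cap b$ identified by Lemma~\ref{lem:fourCases}. The first step is to analyze which vertices of $b^-$ and $b^+$ are forced to lie in $S$: since $\gamma_S(b) = 1$, the five vertices of $b \setminus \{x\}$ that are not dominated by $x$ must draw their domination from outside $b$, and the only external neighbors available lie in $b^-$ and $b^+$ (with the two ``spoke'' edges from $v_{i-1}$ and $v_{i+1}$ reaching into the adjacent blocks). This pins down a small finite list of possible local patterns of $S$ in $b^- \cup b \cup b^+$.

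The second step is to exhibit the swap. Write $S' = (S \setminus \{x\}) \cup \{y\}$, where $y$ is a vertex of $b$ chosen so that the block $b$ now has a different type (e.g., the central vertex of $b$ plays the role of $y$ if that leaves $b$ in type A, C, or D, or else a neighbor of $x$ is selected in order to preserve the coverage $x$ was providing). The choice of $y$ depends on whether $b$ is positive or negative; the negative case follows by the symmetry of the construction with respect to the parity of the indices mentioned in the definition of positive/negative blocks. Because exactly one vertex is removed and exactly one is added, $|S'| = |S|$, and by construction $b$ is no longer of type B in $S'$.

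The third step is the verification. I would check two things for $S'$. For the lower bound, every vertex of $V \setminus S'$ needs at least one neighbor in $S'$; since the modification is confined to $b$, it suffices to check the vertices of $b^-\cup b\cup b^+$, using the forced structure from step~1 to show that the vertices formerly dominated only by $x$ are still dominated (either by $y$ itself or by the already-forced vertices of $b^-$ and $b^+$). For the upper bound, $|N(v)\cap S'| \le 2$ must hold for every $v$; since $S$ satisfied this property, one needs only to verify that no neighbor of $y$ acquires a third $S$-neighbor. The assumption $n \ge 12$ is used here to guarantee that $b^-$, $b$, and $b^+$ are genuinely distinct and do not wrap around, so the local analysis is not polluted by long-range coincidences.

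The main obstacle is the upper-bound verification in step~3. Moving $x$ to $y$ may place $y$ adjacent to vertices that already have two $S$-neighbors — one inside $b^\pm$ and one elsewhere — so a careful enumeration of the forced patterns from step~1 is needed to rule this out, together with an argument that if the upper bound would be violated then $S$ itself could have been reduced, contradicting its assumed form. This is where the bulk of the case analysis lives, and where the four cases corresponding to the type-B variants of positive versus negative blocks have to be handled one by one.
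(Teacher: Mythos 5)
Your overall strategy (identify the forced vertices of $S$ in $b^-$ and $b^+$, then repair $b$ by a single equal-size swap) matches the paper's, but the specific move you commit to in step~2 --- removing the unique vertex $x\in S\cap b$ and inserting another vertex $y$ \emph{of $b$} --- is not always available, and this is a genuine gap rather than a verification detail. For a type-B block the unique dominator is the central outer vertex $u_i$, and the forced vertices are $v_{i-3},v_{i+3}\in S$. Now consider the configuration $u_{i-2},u_{i+2},v_{i-4}\in S$ and $v_{i-2},v_{i+2},v_{i+4}\notin S$, which is compatible with $S$ being $\OT$-dominating (e.g.\ $v_{i-2}$ then has exactly two $S$-neighbors, $v_{i-4}$ and $u_{i-2}$). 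Every candidate $y\in b$ fails: $y=v_{i\pm1}$ leaves $u_i$ itself undominated; $y=u_{i\pm1}$ leaves $v_i$ undominated, since its only other neighbors $v_{i-2},v_{i+2}$ are outside $S$; and $y=v_i$ gives the non-$S$ vertex $v_{i-2}$ three neighbors in $S'$, namely $v_{i-4}$, $u_{i-2}$ and $v_i$, violating the upper bound of $\OT$-domination. The paper escapes exactly this situation by relocating a vertex of a \emph{neighboring} block instead, e.g.\ $S'=S\setminus\{u_{i-2}\}\cup\{u_{i-1}\}$ (and elsewhere $S'=S\setminus\{u_{i-3}\}\cup\{u_{i-1}\}$ or $S'=S\setminus\{u_{i+3}\}\cup\{u_{i+1}\}$), so that $\gamma_{S'}(b)=2$ and $b$ leaves ${\cal B}_1(S')$ altogether. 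The case analysis you defer to step~3 therefore cannot be closed with your restricted move set; you must allow swaps that remove vertices of $b^-$ or $b^+$.

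Two smaller points. First, your fallback argument ``if the upper bound would be violated then $S$ itself could have been reduced, contradicting its assumed form'' has nothing to contradict: the lemma does not assume $S$ is minimum, only that it is a $\OT$-dominating set containing a type-B block. Second, the lemma asks for an $S'$ containing \emph{no} type-B block; repairing the single block $b$ suffices only if you also argue, as the paper at least asserts, that the swap creates no new type-B block, so that the procedure can be iterated over all type-B blocks. Finally, the four cases in the paper are not ``type-B variants of positive versus negative blocks'' but a split according to whether $u_{i-2}$ and $u_{i+2}$ belong to $S$; the negative case is indeed handled by symmetry, as you say.
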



\begin{proof}
  In order to dominate $v_{i-1}$ and $v_{i+1}$ from block $b$,
  vertices $v_{i-3}$ from block $b^{-}$ and $v_{i+3}$ from $b^{+}$ need to be
  in $S$. The idea is to move some
  dominating nodes such that block $b$ is not no longer of type B and
  no new block of type B emerges while $S$ is still $\OT$-dominating
  and the cardinality of $S$ remains. The proof is divided into four
  cases, considering whether $u_{i-2}$ from block $b^{-}$ and
  $u_{i+2}$ from $b^{+}$ are in $S$ or not. The notation of the nodes is taken from Fig.~\ref{img:blocks}.

  Case 1. $u_{i-2}, u_{i+2} \in S$. If $v_{i-4}$ and
  $v_{i+4}$ are not in $S$ then $S'= S / \{ u_i\} \cup \{ v_i\}$. If
  $v_{i-4}$ or $v_{i+4}$ are in $S$ then $S'= S / \{ u_{i-2} \} \cup
  \{ u_{i-1}\}$ or $S'= S / \{ u_{i+2} \} \cup \{ u_{i+1}\}$.

  Case 2. $u_{i-2}\not\in S$, $u_{i+2}\in S$. To dominate
  $u_{i-2}$ and $v_{i-2}$ we consider two subcases.

  Subcase 2.1. $v_{i-2}\in S$. If $u_{i-3}$ is not in $S$ then
  $S'= S / \{ u_{i}\} \cup \{ u_{i-1}\}$. If $u_{i-3}\in S$ then there
  are three possibilities depending on which vertex dominates
  $u_{i+4}$. Hence, if $u_{i+3} \in S$ then $S'= S / \{ u_{i+2}\}
  \cup \{ v_{i}\}$. If $v_{i+4}\in S$ then $S'= S / \{ u_{i+2}\}
  \cup \{ u_{i+1}\}$. Otherwise, the vertex $u_{i+4}$ is dominated by
  node $u_{i+5}$ of block $b^{++}$ then $S'= S / \{ u_{i}\} \cup \{
  v_{i-1}\}$.

  Subcase 2.2. $v_{i-2} \not\in S$. This implies that $u_{i-3}$
  and $v_{i-4}$ from are both in $S$. Then $S'= S / \{ u_{i-3}\} \cup
  \{ u_{i-1}\}$.

  Case 3. $u_{i+2}\not\in S$,
  $u_{i-2}\in S$. This case is symmetric to case 2.

  Case.4. $u_{i+2},u_{i-2}\not\in S$. In order to dominate
  $u_{i-2}$ and $v_{i-2}$ two situations must be considered.
 
  Subcase 4.1. $v_{i-2}\in S$. Since $v_{i-2}$ is in $S$ and
  $v_i$ is not in $S$ then $v_{i+2}$ cannot be a dominating node. This
  yields that $u_{i+3}$ and $v_{i+4}$ are in $S$. Then $S'= S / \{
  u_{i+3}\} \cup \{ u_{i+1}\}$.

  Subcase 4.2. $v_{i-2}\not\in S$. This implies $u_{i-3},
  v_{i-4}\in S$. Therefore, $S'= S / \{ u_{i-3}\} \cup \{ u_{i-1}\}$.
 \end{proof}

The next Lemma finally completes the proof of Theorem~\ref{thm:thm1}.
 
\begin{lem}
  If ${\cal B}_1(S) \not= \emptyset$ and $n \geq 6$ then $|S|\ge
  f(n)$.
\end{lem}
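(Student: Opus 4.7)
The plan is to argue by contradiction. Assume there is a $\OT$-dominating set $S$ of $P(n,2)$ with ${\cal B}_1(S)\neq\emptyset$, $n\ge 12$, and $|S|<f(n)$. By Lemma~\ref{noB} we may replace $S$ by another $\OT$-dominating set of the same cardinality whose ${\cal B}_1$ contains no block of type~B. Hence every $b\in{\cal B}_1(S)$ is of type A, C, or~D in the classification of Lemma~\ref{lem:fourCases}.

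For each such block $b$, centred at some index $i$, I would identify the vertices of $b^-$ and $b^+$ that are forced into $S$ by the requirement to dominate the five non-selected vertices of $b$ from outside. Case by case on the three remaining types, each $b\in{\cal B}_1(S)$ yields a lower bound on $\gamma_S(b^-)+\gamma_S(b^+)$ (typically at least $4$), so that the deficit of $\gamma_S(b)=1$ relative to the target average~$2$ is paid for by surpluses in its two neighbours. The analysis is analogous in spirit to the proof of Lemma~\ref{noB}, but now the point is not to move vertices of $S$: it is to extract unavoidable vertices of $S$ from the dominance constraints.

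I would then aggregate these local inequalities via the identity $\sum_b \gamma_S(b)=3|S|$ summed over all $n$ blocks. Combined with the preceding step, this should yield $3|S|\ge 2n + c_n$, where $c_n$ is a correction determined by $n$ modulo~$6$. Tracking the extremal configurations in which the block forcings are near-equalities, and checking what happens at the wrap-around of the cycle where the propagation must consistently close, should give precisely the additive offset distinguishing $n\equiv 0,3\,[6]$ from $n\equiv 1\,[6]$ from the remaining residues, recovering $f(n)$ exactly and contradicting $|S|<f(n)$.

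The main obstacle I anticipate is the interaction between two elements of ${\cal B}_1(S)$ lying close together: the $S$-vertices forced on one side of a light block may coincide with those forced by a nearby light block, so a naive double count overshoots and the $3|S|\ge 2n+c_n$ inequality is compromised. An inclusion--exclusion book-keeping (or a discharging argument that sends fractional surplus charge from heavy blocks in ${\cal B}_j(S)$, $j\ge 3$, to the adjacent light blocks), together with a separate closing argument at the wrap-around of the cycle, will be the delicate part needed to secure the tight bound $f(n)$ rather than merely $\lceil 2n/3 \rceil$.
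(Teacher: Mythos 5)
There is a genuine gap, and it sits exactly where you flag it. Your central local inequality, $\gamma_S(b^-)+\gamma_S(b^+)\ge 4$ for a light block $b$, provides no surplus at all: since $\sum_b\gamma_S(b)=3|S|$ averages the target value $2$ per block, a light block creates a deficit of $1$ that can only be paid for by blocks with $\gamma_S\ge 3$, i.e.\ you would need $\gamma_S(b^-)+\gamma_S(b^+)\ge 5$ (or surplus further away). No such local surplus exists in general. Concretely, for a light block of type A the domination constraints only force two specific vertices of $b^+$ into $S$, and the configuration $\gamma_S(b)=1$, $\gamma_S(b^+)=2$, $\gamma_S(b^{++})=2,\dots$ is locally consistent and propagates arbitrarily far; there is nothing nearby to discharge from, so no inclusion--exclusion or charging scheme confined to a bounded neighbourhood of each light block can recover the bound. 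The paper escapes precisely this situation not by counting but by surgery: when $\gamma_S(b^+)=2$ it excises the six columns of $b$ and $b^+$, obtains a $\OT$-dominating set of $P(n-6,2)$ of size $|S|-4$, and invokes $f(n)=f(n-6)+4$ together with minimality of $n$. This induction on $n$ is the essential ingredient your plan lacks, and the wrap-around/tightness issues you defer to ``delicate bookkeeping'' are exactly the cases it is needed for.

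A second missing ingredient is the extremal setup that makes the local arguments close. The paper does not work with an arbitrary counterexample: it fixes $n$ minimal, restricts to type-B-free counterexamples (via Lemma~\ref{noB}), and among those picks $S\in\M$ minimizing $|{\cal B}_1(S)|$. The swap arguments for types A, C, D then derive a contradiction either by producing an $S'$ with strictly fewer light blocks (contradicting the extremal choice) or by the excision/induction step, thereby showing ${\cal B}_1(S)=\emptyset$ for such $S$ and contradicting the first part of the proof. Your proposal keeps the light blocks and tries to absorb their deficits globally; without the induction and without an extremal choice to lean on, the aggregation step $3|S|\ge 2n+c_n$ cannot be established from the forcing analysis you describe. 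As written, the proposal is a programme whose decisive step is acknowledged but not carried out, and the step as formulated (surplus $4$ over the two neighbours) is arithmetically insufficient even before the overlap and wrap-around difficulties arise.
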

\begin{proof}
  Let $n$ be minimal such that the lemma is false. Then $n \ge 12$
  by Lemma~\ref{lem:smallCases}. Let ${{\cal S}}_B$ the set of all
  $\OT$-dominating sets $S$ of $P(n,2)$ not containing a block of
  type B and $|S|< f(n)$. Then ${\cal B}_1(S)\not=\emptyset$ for all
  $S\in {{\cal S}}_B$ by the first part of the proof. Let $p$ be the
  largest number such that $|{\cal B}_1(S)|\ge p$ for each $S\in
  {{\cal S}}_B$. Then $p\ge 1$. Let $\M$ be the set of all $S\in
  {{\cal S}}_B$ with $|{\cal B}_1(S)|=p$.



  {\bf Claim 1:} $P(n,2)$ does not contain a
  block of type A for any $S\in \M$.\\
  Assume false. Let $S\in \M$ and $b$ a positive block of type A. Then
  the nodes $v_{i+3}$ and $u_{i+2}$ of $b^+$ must be dominating.
  Assume $\gamma_{b^+}(S)\ge 3$. Then $S'=S\setminus\{u_{i+2}\}\cup
  \{u_{i}\}$ is also a $[1,2]$-dominating set. 
  Thus, $\gamma_{b}(S')= 2$. Then $|{\cal
    B}_1(S')| = |{\cal B}_1(S)| -1 < p$ since $\gamma_{b}(S)= 1$. This
  yields $S' \not\in {{\cal S}}_B$ and therefore ${\cal B}_1(S) =
  \emptyset$. Thus, $\gamma_{b^+}(S)=2$.


  Let $b^{++}$ be the positive block to the right of $b^+$. Then the
  nodes $u_{i+5}$ and $v_{i+6}$ of $b^{++}$ must be dominating. 
  Next we remove the nodes of the blocks
  $b$ and $b^+$ and connect the corresponding nodes of blocks $b^-$
  and $b^{++}$. The resulting graph is isomorphic to $P(n-6,2)$.
  Furthermore, $S'=S\setminus\{v_i, u_{i+2}, v_{i+3}, u_{i+5}\}$ is a
  $[1,2]$-dominating set of this graph. 
  Thus, $|S'|=|S|-4 \ge f(n-6)$ by the choice
  of $n$. Therefore $|S| \ge f(n-6) + 4 = f(n)$. This implies $|S|\ge
  f(n)$. This contradiction proves claim 1 for positive blocks of
  type A. The same argument shows that there are no negative blocks
  of type A.

  {\bf Claim 2:} $P(n,2)$ does not contain a block of type
  D for any $S\in \M$\\
  Assume false. As above we only need to consider the positive case.
  Let $S\in \M$ and $b$ a positive block of type D. Then nodes
  $v_{i+3}$ and $u_{i+2}$ of $b^+$ must be dominating. Assume
  $\gamma_{b^+}(S)=2$. Then again the nodes
  $u_{i+5}$ and $v_{i+6}$ of block $b^{++}$ must be dominating. We
  distinguish two cases. If $v_{i-2}$ is not a dominating node then
  $S'=S\setminus\{v_{i+3}\}\cup \{v_{i+1}\}$
  else ($v_{i-2}$ is a dominating node) then we have again two subcases depending on $\gamma_{b^{++}}(S)$.
  If $\gamma_{b^{++}}(S)=2$ then the nodes $u_{i+8}$ and $v_{i+9}$ of
  the block to the right of $b^{++}$ must be dominating nodes. We
  remove the nodes of the blocks $b$ and $b^+$ and connect the
  corresponding nodes of blocks $b^-$ and $b^{++}$ with
  $S'=S\setminus\{u_{i-1}, u_{i+2}, v_{i+3}, v_{i+6}\}$. Similar to
  the proof of claim 1 this leads to a contradiction. If
  $\gamma_{c}(S)\geq 3$ then at least one of the nodes $v_{i+5}$ and
  $u_{i+6}$ is a dominating node. Then we again remove the nodes of
  the blocks $b$ and $b^+$ and connect the corresponding nodes of
  blocks $b^-$ and $b^{++}$ with $S'=S\setminus\{u_{i-1}, u_{i+2},
  v_{i+3}, u_{i+5}\}$. Similar to the proof of claim 1 this leads to a
  contradiction.

  Hence, $\gamma_{b^+}(S)\ge 3$. In the following we will construct a
  new $[1,2]$-dominating set $S'$ with $|{\cal B}_1(S')|< p$. This is
  a contradiction.
  
  Case 1. $v_{i+2},u_{i+3}\in S$. There are three subcases. If
  $v_{i-3}\not\in S$ then $S'=S\setminus\{u_{i+2}\}\cup \{v_{i+1}\}$
  and if $v_{i-2}\not\in S$ then $S'=S\setminus\{u_{i+2}\}\cup
  \{u_{i}\}$. If $v_{i-3},v_{i-2}\in S$
  then $S'=S\setminus\{u_{i-1}, u_{i+2}\}\cup \{u_{i}, v_{i} \}$.

  Case 2. Neither $v_{i+2}$ nor $u_{i+3}$ are in $S$. Since
  $\gamma_{b^+}(S)\ge 3$ this implies that $v_{i+4}$ is a dominating
  node and $S'=S\setminus\{u_{i+2}\}\cup \{u_{i+1}\}$.
  
  Case 3. If $v_{i+2}\in S$ and $u_{i+3}\not\in S$
  then $S'=S\setminus\{u_{i+2}\}\cup \{u_{i+1}\}$.
  
  Case 4. If $v_{i+2}\not\in S$ and $u_{i+3}\in S$ we distinguish two
  cases: If $v_{i+4}\in S$ then $S'=S\setminus\{u_{i+2}\}\cup
  \{u_{i+1}\}$ else we have four subcases depending on which node
  dominates $u_{i+5}$:
  \begin{enumerate}
  \item If $v_{i+5}\in S$ then
    $S'=S\setminus\{v_{i+3}\}\cup \{u_{i+1}\}$.
  \item If $u_{i+5}\in S$ then
    $S'=S\setminus\{u_{i+3}\}\cup \{u_{i+1}\}$.
  \item If $u_{i+6}\in S$ then we distinguish three cases depending on
    which node dominates $v_{i+4}$. If $u_{i+4}\in S$ then
    $S'=S\setminus\{u_{i+2}, u_{i+4} \}\cup \{v_{i+2}, u_i\}$. If
    $v_{i+4}\in S$ then $S'=S\setminus\{u_{i+2}, v_{i+4} \}\cup
    \{v_{i+2}, u_i\}$. Finally if $v_{i+6}\in S$ then we
    remove the nodes of the blocks $b$ and $b^+$ and connect the
    corresponding nodes of blocks $b^-$ and $c$.
  \item If $u_{i+4}\in S$ then $S'=S\setminus\{u_{i+2},u_{i+3}\}\cup
    \{u_{i+1}, v_{i+4}\}$.
  \end{enumerate}
  This proves claim 2.
  

  {\bf Claim 3:} $P(n,2)$ does not contain a
  block of type C for any $S\in \M$\\
  This case is symmetric to the second claim.

  {\bf Claim 4:} $P(n,2)$ does not contain a
  block of type B for any $S\in \M$\\
  If $S$ contains a block of type B then by Lemma~\ref{noB} there
  exists $S'\in {\cal S}_B$ which does not contain a block of type B.
  The above claims yield ${\cal B}_1(S')=\emptyset$. This
  contradiction concludes the proof of the lemma.
\end{proof}

\section{Determination of $\OTTDN(P(n,2))$}
In this section, we analyze the $[1,2]$-total dominating sets of
$P(n,2)$ and prove the Theorem~\ref{thm:thm2}. For the case $n=5$ we
refer to Fig.~\ref{fig:graph56}. We split the proof into two lemmata.
Denote by $g(n)$ the value of the right side of the equation in
Theorem \ref{thm:thm2}.

\begin{lem}
  $\OTTDN(P(n,2)) \le g(n)$ for $n>5$.
\end{lem}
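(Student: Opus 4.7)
The plan is to exhibit, for each residue class of $n$ modulo $6$, an explicit $\OT$-total dominating set of $P(n,2)$ of size $g(n)$. Since Fig.~\ref{fig:upperBound} already provides $\OT$-dominating sets of size $f(n) = g(n)$ for $n \not\equiv 1 \pmod 6$, in these cases the only extra work is to verify the additional total-domination requirement that every vertex of $S$ itself has at least one (and still at most two) neighbour in $S$.

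First I would handle the base case $n \equiv 0 \pmod 3$ by taking $S = \{u_i, v_i : i \equiv 1 \pmod 3\}$, of size $2n/3$. A direct case analysis on $i \bmod 3$ shows that every vertex has exactly one neighbour in $S$: each in-set pair $\{u_i,v_i\}$ dominates itself through the spoke edge $u_iv_i$, while each out-of-set vertex is reached by a unique cycle edge (on the outer $u$-cycle or on the inner $v$-cycle of step $2$). This disposes of $n \equiv 0, 3 \pmod 6$.

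For $n \equiv 2, 4, 5 \pmod 6$ I would reuse the sets from Fig.~\ref{fig:upperBound} (the subfigures marked ``fortotal''), which consist of the periodic middle-pair pattern on the first $3\lfloor n/3\rfloor$ indices together with two extra dominators placed in the leftover region. It then suffices to check locally, in the vicinity of the two extra vertices, that no vertex acquires three neighbours in $S$ and that every in-set vertex still has at least one neighbour in $S$. For $n \equiv 1 \pmod 6$ the construction of Fig.~\ref{fig:upperBound} contains an isolated dominator inside the five-vertex correction region; I would adjoin one further vertex adjacent to it, raising the total size to $2\lfloor n/3\rfloor + 2 = g(n)$, and again verify locally that the upper bound of two neighbours is preserved.

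The main obstacle is this seam check: after gluing the correction region onto the periodic pattern, one must inspect the handful of vertices whose closed neighbourhood spans the seam and confirm that each has either one or two neighbours in $S$. Because every vertex of $P(n,2)$ has degree three, the check is strictly local and reduces to a small, residue-by-residue case analysis, with at most a dozen vertices to examine per residue class.
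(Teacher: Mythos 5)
Your proposal is correct and follows essentially the same route as the paper: exhibit explicit sets of size $g(n)$, observing that the $\OTDN$ constructions of Fig.~\ref{fig:upperBound} are already total dominating sets for $n\not\equiv 1\;[6]$, and handling $n\equiv 1\;[6]$ separately with one extra vertex beyond $f(n)$. The only (inessential) difference is that for $n\equiv 1\;[6]$ the paper draws a fresh construction (periodic middle pairs plus two added vertices, Fig.~\ref{constructotal}) rather than patching the $\OTDN$ construction as you do; both yield $2\lfloor n/3\rfloor+2=g(n)$.
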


\begin{proof}
  In Fig.~\ref{constructotal}, we give the construction of the minimum
  $[1,2]$-total dominating set in $P(n,2)$ for $n \equiv 1[6]$. The
  proposed construction is based on the selection of one pair of nodes
  of the middle in each block which corresponds to $2n/3$ nodes. Then,
  we add two additional dominating nodes as depicted in color red in
  Fig.~\ref{constructotal}. For all other cases we refer to
  Fig.~\ref{fig:upperBound} since the provided sets are already total
  dominating sets.
\end{proof}

\begin{lem}
  $\OTTDN(P(n,2)) \ge g(n)$ for $n>5$.
\end{lem}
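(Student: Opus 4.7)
The plan is to use Theorem~\ref{thm:thm1} to settle most residues immediately. Every $[1,2]$-total dominating set is also a $[1,2]$-dominating set, so $\OTTDN(P(n,2))\ge\OTDN(P(n,2))=f(n)$. Since $g(n)=f(n)$ for all residues modulo~$6$ except $n\equiv 1\pmod 6$, the bound $\OTTDN(P(n,2))\ge g(n)$ is immediate for those residues, and the remaining task is to prove $\OTTDN(P(n,2))\ge f(n)+1$ when $n\equiv 1\pmod 6$, i.e.\ that no $[1,2]$-dominating set of size exactly $f(n)$ can be total.

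Assume for contradiction that such an $S$ exists. Since $S$ is in particular a minimum $[1,2]$-dominating set, I would reuse the structural machinery of Section~\ref{sec:determ-otdnpn-2} and split on whether ${\cal B}_1(S)$ is empty. In the empty case the arguments of Lemma~\ref{lem:case2} and Lemma~\ref{lem:case1} remain valid when the strict inequality of~(\ref{basiceq}) is relaxed to $\sum_i x_i=f(n)$, because the contradictions they derive for $n\equiv 1\pmod 6$ actually give $\sum_i \hat x_i\ge f(n)+1$. Hence $\gamma_S(p_i)\le 1$ for every $i$ and the pair pattern is, up to rotation, uniquely determined: one run of three consecutive $1$'s, all other runs of $1$'s of length two, separated cyclically by single $0$-pairs.

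I would then label each $1$-pair as $u$-type or $v$-type according to which of its two vertices belongs to $S$, and use two totality rules: a $u$-type pair needs a $u$-type $1$-pair at pair-distance one, and a $v$-type pair needs a $v$-type $1$-pair at pair-distance two. A direct check forces length-two runs to be monochromatic (a mixed run would leave an $S$-vertex without an $S$-neighbour) and forces the middle pair of the triple to be $u$-type (its two distance-two neighbours both lie inside $0$-pairs, so being $v$-type is impossible). A short case split on the end pairs of the triple then shows that the $v$-cascade triggered by the $0$-pair flanking the triple propagates into the adjacent length-two run and eventually demands a $v$-vertex inside a $0$-pair, the desired contradiction.

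The case ${\cal B}_1(S)\ne\emptyset$ is handled first by noting that any block in ${\cal B}_1(S)$ whose sole $S$-vertex is the central vertex $u_i$ is immediately excluded, because all three neighbours of $u_i$ lie inside the block and hence outside $S$, contradicting totality. For the remaining block types I would apply Lemma~\ref{noB} and then replay the transformations of Claims~1--4 from the last lemma of Section~\ref{sec:determ-otdnpn-2}, checking that each local relocation of an $S$-vertex to a neighbour preserves totality (the relocated vertex inherits an $S$-neighbour from its predecessor, and no vertex loses its last $S$-neighbour because the transformations are local). Iteration either produces a $[1,2]$-dominating set of smaller size (impossible since $|S|=f(n)$ is minimum) or strictly reduces $|{\cal B}_1|$, and the process ultimately reduces to the empty case already excluded. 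The main obstacle is the $u/v$-cascade in the empty case: tracking the label propagation cleanly around a cycle of length $n\equiv 1\pmod 6$ takes some bookkeeping, but once monochromaticity of length-two runs and the $u$-type forcing of the triple's middle are in hand, the contradiction emerges within a single length-two run adjacent to the triple.
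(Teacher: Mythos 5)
Your first reduction is exactly the paper's: $\OTTDN(P(n,2))\ge\OTDN(P(n,2))=f(n)$ disposes of every residue except $n\equiv 1\ [6]$, where one must rule out a total $[1,2]$-dominating set of size exactly $f(n)$. From there, however, you diverge from the paper, and the divergence contains a genuine gap. Your treatment of the case ${\cal B}_1(S)\ne\emptyset$ rests on the assertion that the relocations of Lemma~\ref{noB} and of Claims~1--4 "preserve totality because the transformations are local." That is not an argument: a relocation $S'=S\setminus\{a\}\cup\{b\}$ can strand an $S$-vertex $c\in N(a)\setminus N(b)$ whose only $S$-neighbour was $a$, and nothing in the design of those transformations guards against this --- they were built solely to preserve the $[1,2]$-domination property and the cardinality. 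Each of the roughly two dozen case-by-case moves, plus the two surgeries that delete blocks $b,b^+$ and pass to $P(n-6,2)$, would need an individual totality check (for the surgeries, also a check that vertices of $b^-$ and $b^{++}$ do not lose their last $S$-neighbour), and the extremal framework ${\cal S}_B$, $\M$, $p$ is defined for sets of size $<f(n)$, so it would have to be rebuilt for total sets of size exactly $f(n)$ with a restated induction hypothesis. None of this is done, so the non-empty case does not close. (Your opening observation that totality instantly kills type~B blocks --- the central vertex $u_i$ would have all three neighbours inside the block and outside $S$ --- is correct and a nice shortcut past Lemma~\ref{noB}, and your empty-${\cal B}_1$ analysis, with monochromatic length-two runs and the forced $u$-type middle of the triple, is plausible and appears completable; but it is the smaller half of the work.)

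For contrast, the paper avoids the block machinery entirely here. It looks at the induced subgraph $G[S]$: totality forces every component to be a path or a cycle with at least two, respectively five, vertices; a path of order $l$ dominates at most $2l+2$ vertices and a cycle at most $2l$, which yields $|S|+\sum_l x_l\ge n$ for the number $x_l$ of path components. For $n=6k+1$ and $|S|\le 4k+1$ this pins $G[S]$ down to exactly one $P_3$ and $x_2$ copies of $P_2$, each dominating the maximum possible number of vertices, i.e.\ a partition of $V(P(n,2))$ into the closed-neighbourhood gadgets of Fig.~\ref{fig:components}; a short forced-extraction argument then shows no such partition exists. That route needs no case split on ${\cal B}_1(S)$ and no relocation lemmas, which is precisely the part of your plan that is missing.
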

\begin{proof}
  For $n \not\equiv 1[6]$ this follows from Theorem~\ref{thm:thm1}. It
  remains to consider the case $n \equiv 1[6]$. Let $S$ be a total
  \OT-dominating set of minimum size of $P(2,n)$ with $|S| < g(n)$.
  Let $G[S]$ be subgraph induced by $S$. By definition of a
  $[1,2]$-total dominating set, each connected component of $G[S]$ has
  at least two vertices and every vertex of $G[S]$ has degree $1$ or
  $2$. Hence, every connected component is either a path or a cycle.
  Let $x_l$ and $y_l$ be the numbers of connected components that are
  paths and cycles of order $l$, respectively. Observe that $x_1 = 0$
  and $y_1 = \dots = y_4 = 0$. Moreover, each path of order $l$
  dominates at most $2l+2$ vertices and each cycle of $l$ vertices
  dominates at most $2l$ vertices. Thus,
\begin{equation}
   \sum_{l\ge 2} (2l+2)x_l + 2ly_l \ge 2n
   \label{eq:formel1}
\end{equation}
\begin{equation}
\sum_{l\ge 2} l(x_l + y_l) = |S|
   \label{eq:formel2}
\end{equation}
From (\ref{eq:formel1}) and (\ref{eq:formel2}) we can deduce
\begin{equation}
|S| + \sum_{l\ge 2} x_l \ge n
   \label{eq:formel3}
\end{equation}
Also observe that
\begin{equation}
\sum_{l\ge 2} lx_l \ge 2 \sum_{l\ge 2} x_l
   \label{eq:formel4}
\end{equation}

Let $n = 6k+1$. Then $g(n) = 4k+2$ and $|S| < 4k+2$. Inequality
(\ref{eq:formel3}) becomes $|S| + \sum_{l\ge 2} x_l \ge 6k+1$, thus
$\sum_{l\ge 2} x_l \ge 2k$. From (\ref{eq:formel2}) and using
(\ref{eq:formel4}), we obtain $4k \le \sum_{l\ge 2} lx_l + \sum_{l\ge
  2} ly_l \le 4k+1$. This implies $\sum_{l\ge 2} ly_l = 0$, thus $4k
\le |S| = \sum_{l\ge 2} lx_l \le 4k+1$. Since $\sum_{l\ge 2} lx_l \le
4k+1$ and $\sum_{l\ge 2} x_l \ge 2k$, we have $\sum_{l\ge 2} lx_l \le
2\sum_{l\ge 2} x_l+1$. This is only possible if $x_3 = 1$ and $x_j =
0$ for all $j >3$. Thus, $G[S]$ is the union one path $P_3$ and $x_2$
paths $P_2$. Since every $P_2$-component can dominate at most $6$
vertices and the $P_3$-component can dominate at most $8$ vertices, we
deduce $6x_2 + 8 \ge 12k+2 =2n$. On the other hand, recall $|S| = 2x_2
+ 3 \le 4k+1$ thus $6x_2 + 8 \le 12k+2$. Hence, $6x_2 + 8 = 12k+2$.
This implies that $P(n,2)$ can be partitioned into $x_2$ components as
shown in Fig.~\ref{fig:components}(a) and one component shown in
Fig.~\ref{fig:components}(b). Suppose such a partitioning exists. In
the following we study the partitioning by making consecutive
extractions of components. Extracting a component means deleting all
its vertices from the graph. Moreover, an extraction is said to be
\textit{forced} if there is no other option. Recall that the set of
vertices of $P(n,2)$ is the union of the two sets $U = \{u_0,\dots,
u_{n-1}\}$ and $V = \{v_0,\dots, v_{n-1}\}$. Vertices of $U$ and $V$
form the two main cycles of $P(n,2)$ respectively. Either all three
vertices of the $P_3$-component are on the same main cycle or two of
them are on one cycle and the third on the other. In the first case,
once the $P_3$ dominated component is extracted, the next forced
extraction of a $P_2$ dominated component would imply the appearance
of a vertex with a degree $2$ (see Fig.~\ref{fig:comp}(a)). In the
second case, after extracting the $P_3$ dominated component and after
several forced extractions of $P_2$ dominated components (see
Fig.~\ref{fig:comp}(b)), it becomes obvious that such a partitioning
is impossible. Hence $x_3 = 0$, a contradiction.
\end{proof}

\begin{figure}[htb]
\begin{center}
\subfigure[]{ \includegraphics[height=1.7cm]{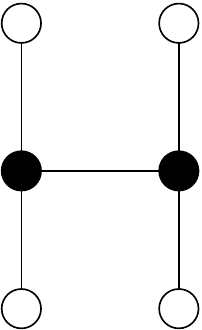}}\label{subfig:HP2} \hspace*{2cm}
\subfigure[]{\includegraphics[height=1.7cm]{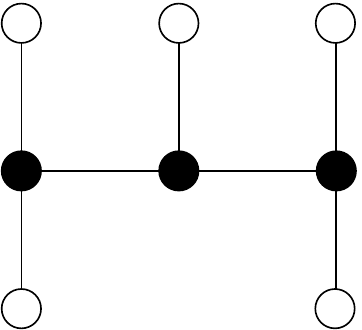}}\label{subfig:HP3}
 
\end{center}

\caption{Maximal components induced by $P_2$ and $P_3$ and their neighbors.}
\label{fig:components}
\end{figure}

\begin{figure}[!ht]
\begin{center}
\subfigure[]{\includegraphics[scale=0.4]{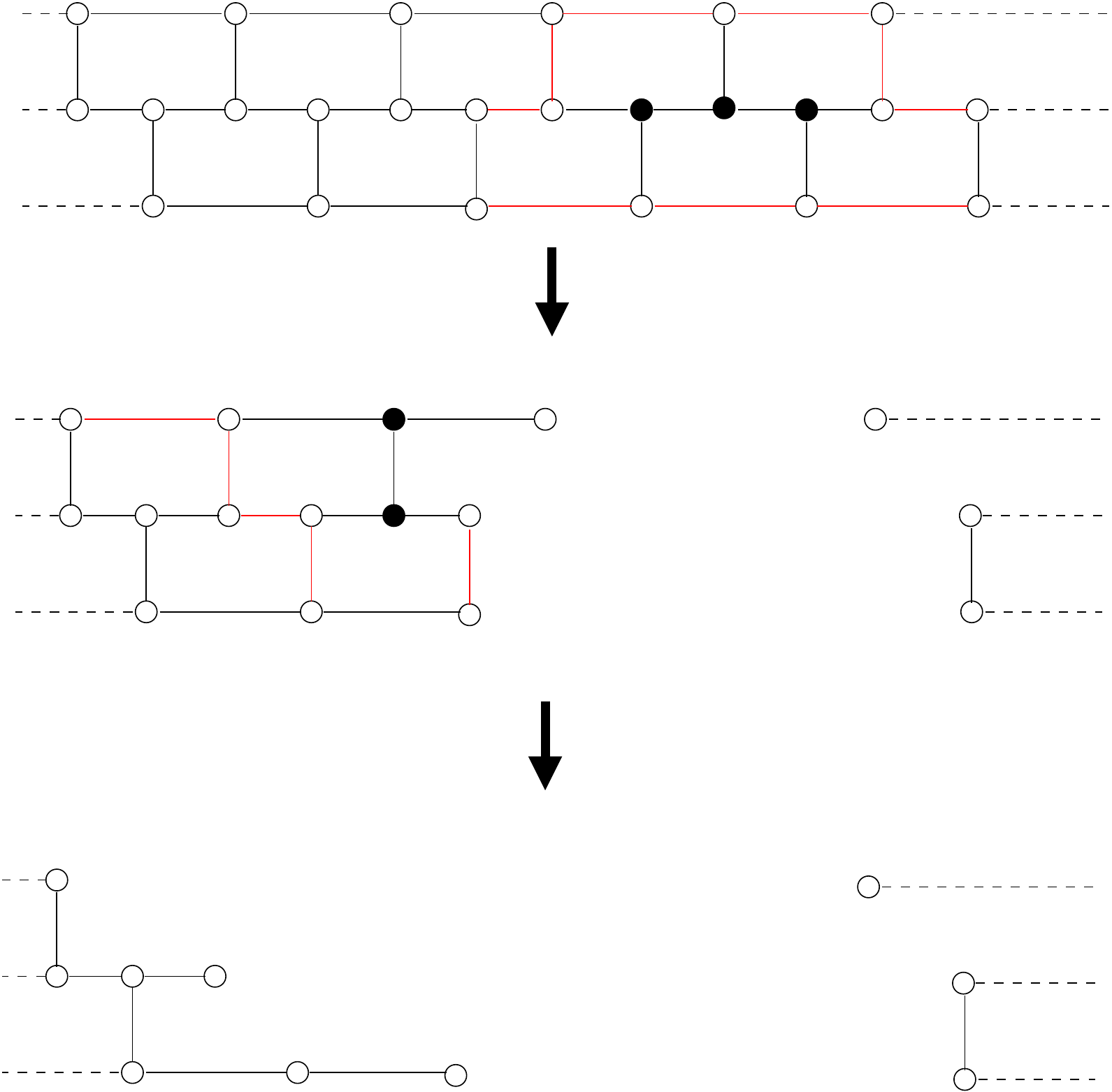}}\label{subfig:P2}
\quad \quad
\subfigure[]{\includegraphics[scale=0.4]{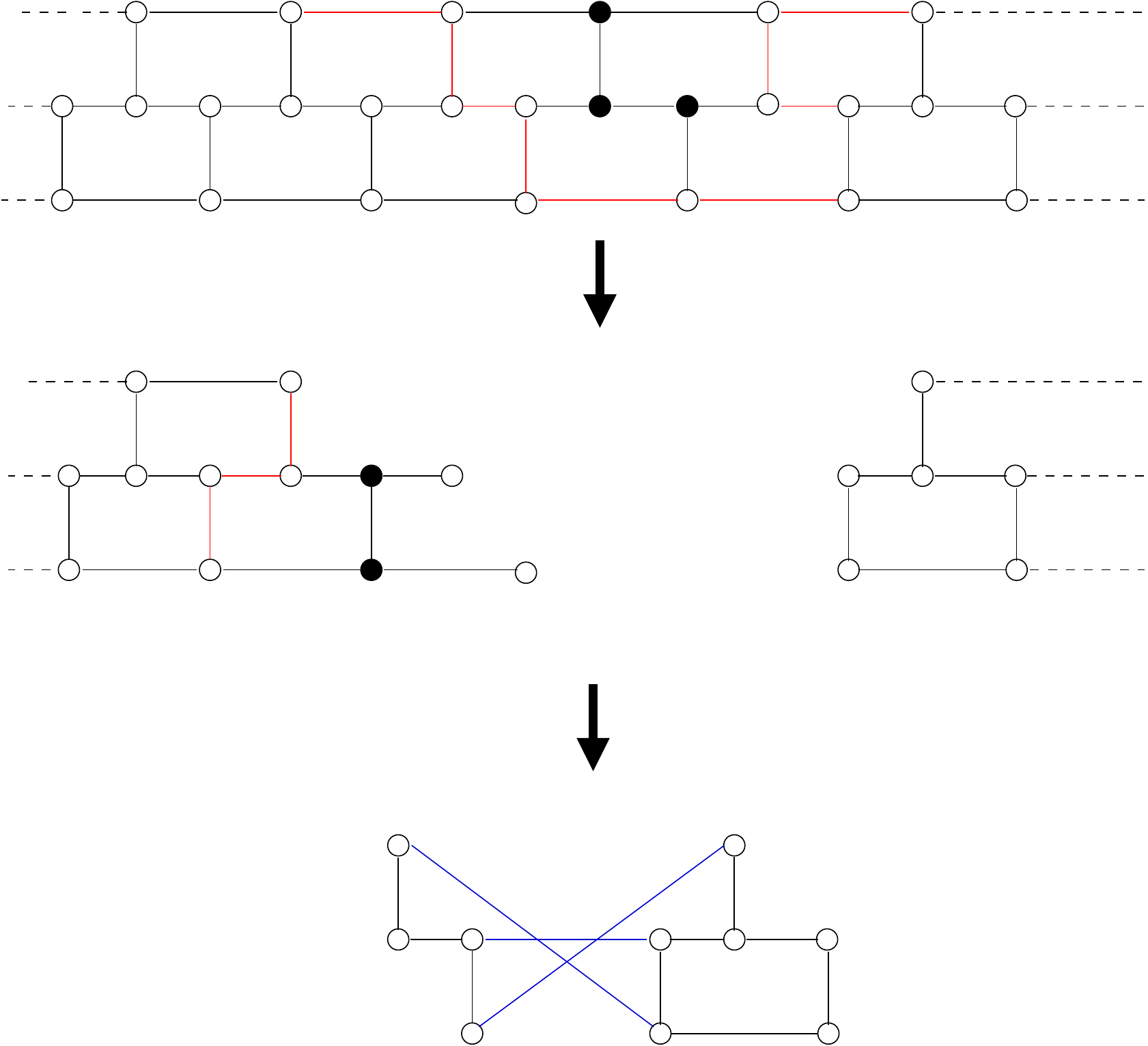}}\label{subfig:P3}
\caption{Impossible partitionings.}
\label{fig:comp}
\end{center}
\end{figure}



\begin{figure}
\begin{center} 
\includegraphics[scale=0.38]{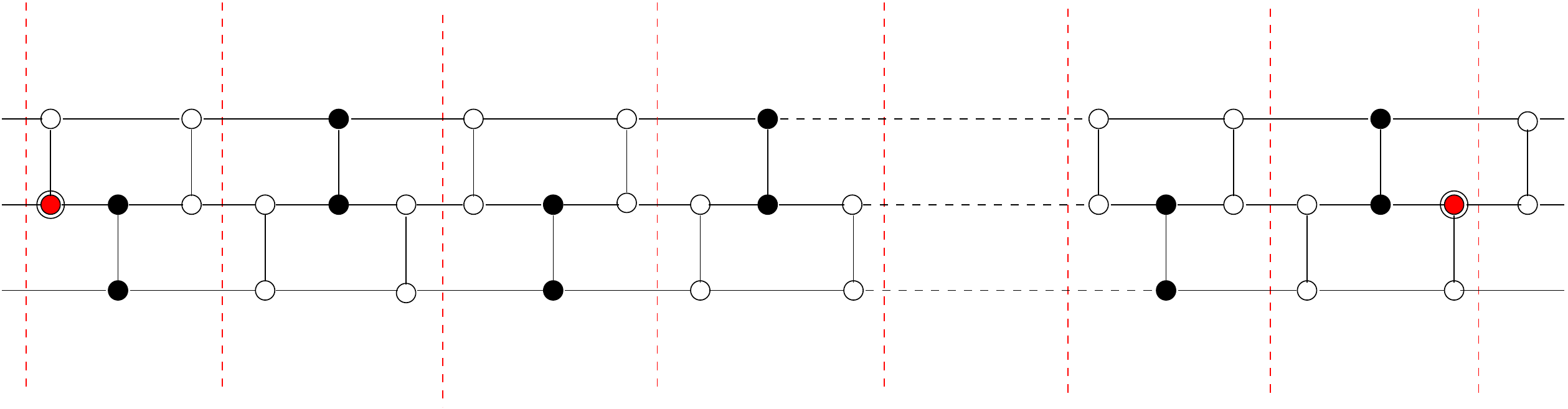}
\end{center}

\caption{The construction of $\OTTDN(P(n,2))$ for $n \equiv 1[6]$.}
\label{constructotal}
\end{figure}

\section{Conclusion}
\label{sec:conclusion}

Generalized Petersen graphs are very important structures in computer science and communication techniques since 
their particular structures and interesting properties.
In this paper, we considered a variant of the dominating set problem, called the $[1,2]$-dominating set problem.
We studied this problem in generalized Petersen graphs $P(n,k)$ for $k=2$. 
We gave the exact values of the $[1,2]$-domination numbers
and the $[1,2]$-total domination numbers of $P(n,2)$.
Obviously $\OTDN(P(n,1)) = \gamma(P(n,1))$ and so as future work we suggest to study 
the $[1,2]$-domination numbers of $P(n,k)$ with $k \geq 3$.


\bibliographystyle{plain}
\bibliography{document}

\end{document}